\newcommand{\tr}{{\rm Tr}}
\newcommand{\gf}{{\mathbb{F}}}
\newcommand{\C}{{\mathcal C}}
\newcommand{\bc}{{\mathbf{c}}}
\begin{document}

\title{Linear Codes with Two or Three Weights From Quadratic Bent Functions}



\author{Zhengchun Zhou  \and
       Nian Li   \and   Cuiling Fan \and Tor Helleseth}


\institute{Z. Zhou  \at School of Mathematics, Southwest Jiaotong University, Chengdu, Sichuan, 610031, China.\\
\email{zzc@home.swjtu.edu.cn}      \\      
\and N. Li  \at Department of Informatics, University of Bergen, N-5020 Bergen, Norway. \\
\email{nian.li@ii.uib.no}    \\
\and C. Fan \at School of Mathematics, Southwest Jiaotong University, Chengdu, Sichuan, 610031, China.\\
\email{fcl@home.swjtu.edu.cn} \\
\and T. Helleseth \at Department of Informatics, University of Bergen, N-5020 Bergen, Norway.\\
\email{tor.helleseth@ii.uib.no} }

\date{Received: date / Accepted: date}

\maketitle

\begin{abstract}
Linear codes with few weights have applications in secrete sharing, authentication codes, association
schemes, and strongly regular graphs.  In this paper, several classes of $p$-ary
linear codes with two or three weights are constructed from quadratic Bent functions over the finite field $\gf_p$, where $p$ is an odd prime. They include some earlier linear codes
 as special cases.  The weight distributions
of these linear codes are also determined.

\keywords{Linear code, optimal code, Bent function,  quadratic form, weight distribution.}
\subclass{ 94A24 \and 94B35 \and 94B15 \and 94A55}
\end{abstract}

\section{Introduction}\label{sec-int}
Throughout this paper, let $p$ be an odd prime and $m$ be a positive integer.
An $[n,\kappa,d]$ linear code over the finite field $\gf_p$ is a $\kappa$-dimensional subspace of $\gf_p^n$
with minimum (Hamming) distance $d$. Let $A_i$ denote the number of codewords
with Hamming weight $i$ in a code $\mathcal{C}$ of length $n$. The weight enumerator of $\mathcal{C}$
is defined by
\begin{eqnarray*}
1+A_1z+A_2z^2+\cdots+A_nz^n.
\end{eqnarray*}
The sequence $(A_1,A_2,\cdots,A_{n})$ is called the weight distribution of the code. Clearly, the weight
distribution gives the minimum distance of the code, and thus the error correcting capability. In addition, the
weight distribution of a code allows the computation of the error probability of error detection and correction
with respect to some error detection and error correction algorithms (see \cite{Klov} for details).
Thus the study of the weight distribution of a linear code is an important research topic in coding theory.
A linear code $\C$ is said to be $t$-weight if the number of nonzero $A_i$ in the sequence $(A_1,A_2,\cdots,A_{n})$
is equal to $t$.

It is well known that linear codes have important applications in consumer electronics, communication
and data storage system. Besides, linear codes with  few weights have also applications in secret sharing \cite{CDY05,Yuan-Ding}, authentication codes \cite{Dingau}, association schemes \cite{Calderbank}, and strongly regular graphs \cite{Calderbank}.
Very recently, Ding \textit{et al.} proposed a general construction of linear codes from a subset $D$ of $\gf_{p^m}$ and
the trace function from $\gf_{p^m}$ to $\gf_p$ \cite{Ding1,Ding2}. This construction can generate two-weight and three-weight linear codes
with excellent parameters if the subset $D$ is appropriately chosen.

The objective of this paper is to present a construction of two-weight or three-weight linear codes based on quadratic Bent functions.
It works for any quadratic Bent function over $\gf_p$, and includes the construction in \cite{Ding2} as a special case.
The weight distribution of the resultant linear codes are determined.
Some of the linear codes obtained in this paper are optimal in the sense that they meet some bounds
on linear codes.

The rest of this paper is organized as follows.  Section \ref{sec-pre}
introduces basic theory of quadratic forms over finite fields which will be
needed in subsequent sections.  Section \ref{sec-main}
establishes a bridge from quadratic Bent functions to linear codes with two or three weights, and
settles the weight distributions of linear codes from quadratic Bent functions. Finally, Section \ref{sec-con}
concludes this paper and makes some comments.

\section{Quadratic forms over finite fields}\label{sec-pre}
Identifying $\gf_{p^m}$ with the $m$-dimensional
$\gf_p$-vector space $\gf_p^m$, a function $Q(x)$ from $\gf_{p^m}$ to $\gf_p$ can be regarded as an $m$-variable polynomial over $\gf_p$.
The former is called a quadratic form over $\gf_p$ if the latter is a homogeneous polynomial of degree two in the form
\begin{eqnarray*}\label{eqn-Q-homogenous}
Q(x_1,x_2,\cdots,x_m)=\sum_{1\leq i\leq j\leq m}a_{ij}x_ix_j,
\end{eqnarray*}
where $a_{ij}\in \gf_p$, and we use a basis $\{\beta_1,\beta_2,\cdots,\beta_{m}\}$ of $\gf_{p^m}$ over $\gf_p$ and identify ${x}=\sum_{i=1}^mx_i\beta_i$ with the vector $\bar{x}=(x_1,x_2,\cdots,x_{m})\in \gf_p^m$. We write $\bar{x}$ when an element is to be thought of as a vector
in $\gf_p^m$, and write $x$ when the same vector is to be thought of as an element of $\gf_{p^m}$.
The rank of the
quadratic form $Q(x)$ is defined as the codimension of the $\gf_p$ -vector space
\begin{eqnarray*}
V=\{y\in \gf_{p^m}: ~Q(x+y)-Q(x)-Q(y)=0 \textrm{~for~all~}x\in \gf_{p^m}\}.
\end{eqnarray*}
That is $|V|=p^{m-r}$ where $r$ is the rank of $Q(x)$.

Quadratic forms  have been well studied (see \cite{Niddle}, \cite{Klappereven}, \cite{Klapperodd}, for example). Here we follow the treatment in \cite{Klappereven} and \cite{Klapperodd}.
It should be noted that the rank of a quadratic form over $\gf_p$ is the smallest number of variables required to represent the quadratic form, up
to  nonsingular coordinate transformations.
Mathematically, any quadratic
form of rank $r$ can be transferred to  three canonical forms as follows. Throughout this section, let $B_{2j}(\bar{x})=x_1x_2+x_3x_4+\cdots+x_{2j-1}x_{2j}$ where $j\geq 0$ is an integer (we assume that $B_0=0$ when $j=0$).
Let $\nu(x)$ be a function over $\gf_p$ defined by $\nu(0)=p-1$  and $\nu(\zeta)=-1$ for any
$\zeta\in \gf_p^*$.

\begin{lemma}\label{Lemma_Klapper2}(\cite{Klapperodd})
Let $Q(x)$ be a quadratic form over $\gf_p$ of rank $r$ in $m$ variables. Then $Q(x)$ is equivalent (under a
change of coordinates) to
one of the following three standard types:

\vspace{2mm}
~~~~Type I: ~~~~ $B_r(\bar{x})$,~~~$r$~~even;

\vspace{2mm}

~~~~Type II: ~~~ $B_{r-1}(\bar{x})+\mu x_m^2$,~~~$r$~~odd;
\vspace{2mm}

~~~~Type III: ~~ $B_{r-2}({\bar x})+x^2_{r-1}-\varsigma x_r^2$,~~~$r$~~even;\\
where $\mu\in \{1, \varsigma\}$ and $\varsigma$ is a fixed nonsquare in $\gf_{p}$.
Furthermore, for any $\zeta\in \gf_p$, the number of solutions $\bar{x}\in \gf_{p}^m$ to the equation $Q(\bar{x})=\zeta$ is:

~~~~Type I: ~~~~ $p^{m-1}+\nu(\zeta)p^{m-r/2-1}$;
\vspace{2mm}

~~~~Type II: ~~~ $p^{m-1}+\eta(\mu \zeta)p^{m-(r+1)/2}$;

\vspace{2mm}
~~~~Type III: ~~ $p^{m-1}-\nu(\zeta)p^{m-r/2-1}$;\\
where $\eta$ is the quadratic (multiplicative) character of $\gf_p$ and $\eta(0)$ is assumed to be 0.
\end{lemma}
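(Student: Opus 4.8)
The plan is to treat the statement in two stages: first the reduction to the three canonical types, and then the solution-counting, where the real content lies. Since $p$ is odd, $2$ is invertible, so the associated symmetric bilinear form $B(\bar x,\bar y)=Q(\bar x+\bar y)-Q(\bar x)-Q(\bar y)$ determines $Q$ and is non-degenerate on the quotient $\gf_{p^m}/V$, which has dimension $r$. Applying Gram--Schmidt orthogonalization with respect to $B$ I would diagonalize $Q$ as $a_1y_1^2+\cdots+a_ry_r^2$ with each $a_i\neq0$. Scaling each variable then lets me replace every square coefficient by $1$ and every nonsquare coefficient by the fixed nonsquare $\varsigma$, so that $Q$ is equivalent to $y_1^2+\cdots+y_s^2+\varsigma(z_1^2+\cdots+z_{r-s}^2)$.

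To reach the stated normal forms I would invoke the two standard binary facts over $\gf_p$: the split form $u^2-v^2$ is equivalent to the hyperbolic plane $x_1x_2$, whereas $u^2-\varsigma v^2$ is anisotropic. Because $\gf_p$ is a $C_1$ field, every form in three or more variables is isotropic, so the anisotropic part has dimension at most two; this forces a decomposition into copies of $B_2$ plus a remainder of dimension at most two. Reading off the parity of $r$ together with the discriminant (the product of the $a_i$ modulo squares, which is the only invariant beyond the rank) then yields exactly Type I when $r$ is even and the discriminant is the hyperbolic one, Type III when $r$ is even and the discriminant is the other class, and Type II (with $\mu\in\{1,\varsigma\}$ recording the discriminant) when $r$ is odd.

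For the counting I would use additive characters. Writing $\psi$ for a fixed nontrivial additive character of $\gf_p$ and $N(\zeta)$ for the number of solutions of $Q(\bar x)=\zeta$,
\[
N(\zeta)=\frac1p\sum_{t\in\gf_p}\sum_{\bar x\in\gf_p^m}\psi\bigl(t(Q(\bar x)-\zeta)\bigr)=p^{m-1}+\frac1p\sum_{t\in\gf_p^*}\psi(-t\zeta)\,S(t),
\]
where $S(t)=\sum_{\bar x}\psi(tQ(\bar x))$. Because $Q$ uses only $r$ essential coordinates, the free coordinates contribute a factor $p^{m-r}$, and on the canonical form $S(t)$ factors over the summands: each hyperbolic plane $x_ix_{i+1}$ contributes a factor $p$, while each square term $ax^2$ contributes a quadratic Gauss sum, $\sum_x\psi(tax^2)=\eta(ta)G$ with $G=\sum_x\eta(x)\psi(x)$ and $G^2=\eta(-1)p$.

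The three cases then separate by the parity of $r$, which is the crux. For Type I every summand is a hyperbolic plane, so $S(t)=p^{m-r/2}$, and $\sum_{t\neq0}\psi(-t\zeta)=\nu(\zeta)$ gives the claim. For Type III the two extra square terms produce $G^2$, which collapses to $\eta(-1)p$ and, together with $\eta(\varsigma)=-1$, flips the sign to yield $-\nu(\zeta)$. For Type II (odd rank) a single square term survives, leaving one factor $G$ and one character value $\eta(t\mu)$; after a change of variable the residual sum over $t$ equals $\eta(-1)\eta(\zeta)G$, so a second factor $G$ appears and $\eta(-1)G^2=p$ cleans everything up, leaving precisely $\eta(\mu\zeta)p^{m-(r+1)/2}$. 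I expect the main obstacle to be exactly this Gauss-sum bookkeeping: keeping the factors $\eta(-1)$ and $\eta(\varsigma)$ straight, and recognizing that even rank makes the Gauss sums pair into powers of $p$ (a rational count governed by $\nu$) while odd rank leaves one unpaired $G$, which is precisely what introduces the quadratic character $\eta$ into the Type II count.
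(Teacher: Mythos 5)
Your proposal is correct. Note, however, that the paper offers no proof of this lemma at all: it is imported verbatim from Klapper's work (\cite{Klapperodd}), so there is no in-paper argument to compare against. What you have written is the standard proof: diagonalize via the associated bilinear form (legitimate since $p$ is odd), reduce to rank and discriminant as the complete set of invariants using the hyperbolic plane $x_1x_2\sim u^2-v^2$, the anisotropy of $u^2-\varsigma v^2$, and the $C_1$ property of $\gf_p$ to bound the anisotropic kernel by dimension two; then count solutions by additive characters, with each hyperbolic summand contributing a factor $p$ to $S(t)$ and each diagonal square term a quadratic Gauss sum $\eta(ta)G$, $G^2=\eta(-1)p$. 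Your bookkeeping checks out in all three cases: for Type I, $S(t)=p^{m-r/2}$ and $\sum_{t\neq0}\psi(-t\zeta)=\nu(\zeta)$; for Type III the product $\eta(t)\eta(-t\varsigma)G^2=\eta(\varsigma)p=-p$ flips the sign; for Type II the unpaired Gauss sum combines with the $\zeta$-sum (zero when $\zeta=0$, $\eta(-1)\eta(\mu\zeta)G$ otherwise) to give $\eta(\mu\zeta)p^{m-(r+1)/2}$, consistently with the convention $\eta(0)=0$. The only places where you lean on unproved standard facts (uniqueness of the binary anisotropic form up to equivalence, $G^2=\eta(-1)p$) are exactly the ones any reference would also take as known, so this is a legitimate self-contained substitute for the citation.
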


An interesting class of quadratic forms is the quadratic form with full rank since in this case the corresponding functions are Bent functions.
 Let $f$ be a function from $\gf_{p^m}$ to $\gf_{p}$. The Walsh transform of $f$ at the point $\lambda\in\gf_{p^m}$ is defined as
   \begin{eqnarray*}
    \widehat{f}(\lambda)=\sum\limits_{x \in \gf_{p^m}}\omega_p^{f(x)-\tr_1^m(\lambda  x)},
  \end{eqnarray*}
  where $\omega_p=e^{2\pi\sqrt{-1}/p}$ is a primitive $p$-th root of unity and $\tr_1^m(x)=\sum^{m-1}_{i=0}x^{p^{i}}$ is the trace function from $\gf_{p^m}$ to $\gf_{p}$.

The function $f$ is called a {Bent function} if $|\widehat{f}(\lambda)|=p^{m/2}$ for all $\lambda\in\gf_{p^m}$. Bent function was introduced by Rothaus in \cite{Rothaus} for boolean functions, namely the case of $p=2$, and later was generalized by Kumar, Scholtz, and Welch in \cite{Kumar-SW} for $p>2$.

%
The following result was proven in \cite{Tang}.

\begin{lemma}(\cite{Tang})\label{lemma-fullrank} 
Let $Q(x)$ be a quadratic form from $\gf_{p^m}$ to $\gf_{p}$ with full rank $m$. Then
$$
\left| \sum_{x\in \gf_{p^m}}\omega_p^{Q(x)-\tr^m_1\left(\lambda x\right)}\right|=p^{m/2}
$$
for any $\lambda \in \gf_{p^m}$.
\end{lemma}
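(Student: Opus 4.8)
The plan is to reduce the general statement to the case $\lambda=0$ by completing the square, and then to evaluate the resulting exponential sum using the solution counts supplied by Lemma \ref{Lemma_Klapper2}. Write $B(x,y)=Q(x+y)-Q(x)-Q(y)$ for the symmetric bilinear form associated with $Q$. Since $p$ is odd and $Q$ has full rank $m$, this form is nondegenerate, so the linear map $c\mapsto B(\cdot,c)$ is a bijection from $\gf_{p^m}$ onto its dual space. In particular, the linear functional $y\mapsto \tr_1^m(\lambda y)$ can be written as $B(y,c)$ for a unique $c\in\gf_{p^m}$. Substituting $x=y+c$ and using $Q(y+c)=Q(y)+B(y,c)+Q(c)$ gives
\begin{eqnarray*}
\sum_{x\in\gf_{p^m}}\omega_p^{Q(x)-\tr_1^m(\lambda x)}
=\omega_p^{Q(c)-\tr_1^m(\lambda c)}\sum_{y\in\gf_{p^m}}\omega_p^{Q(y)},
\end{eqnarray*}
so the modulus of the sum does not depend on $\lambda$, and it suffices to prove $\left|\sum_{y}\omega_p^{Q(y)}\right|=p^{m/2}$.

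For this I would sort the values of $Q$ by writing $\sum_{y}\omega_p^{Q(y)}=\sum_{\zeta\in\gf_p}N(\zeta)\,\omega_p^{\zeta}$, where $N(\zeta)$ is the number of $\bar y\in\gf_p^m$ with $Q(\bar y)=\zeta$. Lemma \ref{Lemma_Klapper2} with $r=m$ supplies $N(\zeta)$ explicitly for each of the three canonical types; in every case $N(\zeta)=p^{m-1}+(\text{correction})$, and since $\sum_{\zeta\in\gf_p}\omega_p^{\zeta}=0$ the constant term $p^{m-1}$ contributes nothing, so only the correction term survives.

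For Types I and III ($m$ even) the correction is $\pm\nu(\zeta)p^{m/2-1}$, so the sum reduces to $\pm p^{m/2-1}\sum_{\zeta\in\gf_p}\nu(\zeta)\,\omega_p^{\zeta}$. Using $\nu(0)=p-1$, $\nu(\zeta)=-1$ for $\zeta\neq0$, and $\sum_{\zeta\neq0}\omega_p^{\zeta}=-1$, one finds $\sum_{\zeta}\nu(\zeta)\omega_p^{\zeta}=p$, whence $\left|\sum_y\omega_p^{Q(y)}\right|=p^{m/2}$. For Type II ($m$ odd) the correction is $\eta(\mu\zeta)p^{(m-1)/2}$, so the sum becomes $p^{(m-1)/2}\eta(\mu)G$ with $G=\sum_{\zeta\in\gf_p^*}\eta(\zeta)\omega_p^{\zeta}$ the quadratic Gauss sum. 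The one external fact I would invoke—and essentially the only nonroutine point—is the classical evaluation $|G|=\sqrt{p}$; granting this, $\left|\sum_y\omega_p^{Q(y)}\right|=p^{(m-1)/2}\cdot 1\cdot p^{1/2}=p^{m/2}$. Combining the three cases with the reduction of the first paragraph completes the proof.
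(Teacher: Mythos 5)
Your proposal is correct. The paper itself gives no proof of this lemma --- it is quoted from the cited reference of Tang, Udaya and Fan --- so there is no internal argument to compare against; but your two-step argument (complete the square using the nondegeneracy of the associated bilinear form $B(x,y)=Q(x+y)-Q(x)-Q(y)$ to strip off the linear term, then evaluate $\sum_y\omega_p^{Q(y)}$ via the solution counts of Lemma \ref{Lemma_Klapper2} with $r=m$) is the standard proof and fits naturally with the machinery the paper already sets up. All the computations check: the constant term $p^{m-1}$ dies against $\sum_\zeta\omega_p^\zeta=0$, $\sum_\zeta\nu(\zeta)\omega_p^\zeta=p$ handles Types I and III, and the quadratic Gauss sum of modulus $\sqrt{p}$ handles Type II; the reduction step is valid because full rank makes $B$ nondegenerate and every $\gf_p$-linear functional on $\gf_{p^m}$ is a trace functional. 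The only external input beyond Lemma \ref{Lemma_Klapper2} is the classical evaluation $|G|=\sqrt{p}$, which you flag explicitly.
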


It can be readily verified from Lemma \ref{lemma-fullrank} that a quadratic form  $Q(x)$ from $\gf_{p^m}$ to $\gf_{p}$ is a Bent function if and only if it has full rank.
In the next section, we will employ quadratic Bent functions to construct linear codes with few weights.  Before doing this, we first give
two lemmas that will be used to prove the main result of the paper.

The following  follows directly from Lemma \ref{Lemma_Klapper2}.
\begin{lemma}\label{lemma-mainlemma_1}
Let  $Q(x)$  be a quadratic Bent function from $\gf_{p^m}$ to $\gf_p$. Define
$$
D_Q=\{x\in \gf_{p^m}^*: Q(x)=0\}.
$$
Then
$$
|D_Q|=p^{m-1}-1
$$
if $m$ is odd, and otherwise
\begin{eqnarray}\label{eqn-size-DQ}
|D_Q|=p^{m-1}+\epsilon(p-1)p^{\frac{m-2}{2}}-1,
\end{eqnarray}
here and hereinafter $\epsilon=1$ if $Q(x)$ is equivalent to Type I and $\epsilon=-1$ if $Q(x)$ is equivalent to Type III.
\end{lemma}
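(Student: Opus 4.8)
The plan is to apply Lemma \ref{Lemma_Klapper2} directly with $\zeta = 0$, after first observing that a quadratic Bent function must have full rank. Indeed, the remark following Lemma \ref{lemma-fullrank} tells us that $Q(x)$ is Bent if and only if its rank equals $m$, so throughout we may set $r = m$. Since $Q$ is a homogeneous polynomial of degree two we have $Q(0) = 0$, which means the origin is one of the solutions of $Q(\bar{x}) = 0$. Because $D_Q$ is defined over $\gf_{p^m}^*$ rather than $\gf_{p^m}$, the quantity $|D_Q|$ will be the total solution count of $Q(\bar{x}) = 0$ minus one.

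First I would split on the parity of $m$. When $m$ is odd, the rank $r = m$ is odd, so among the three standard types of Lemma \ref{Lemma_Klapper2} only Type II is possible, since Types I and III require $r$ even. The Type II count of solutions to $Q(\bar{x}) = \zeta$ is $p^{m-1} + \eta(\mu\zeta)p^{m-(r+1)/2}$; evaluating at $\zeta = 0$ and using the stated convention $\eta(0) = 0$ annihilates the second term, leaving exactly $p^{m-1}$ solutions. Subtracting the origin gives $|D_Q| = p^{m-1} - 1$, as claimed.

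Next, when $m$ is even the rank $r = m$ is even, so $Q$ is equivalent to Type I or Type III, and here I would invoke the convention $\nu(0) = p - 1$. For Type I the number of solutions of $Q(\bar{x}) = 0$ is $p^{m-1} + \nu(0)p^{m-r/2-1} = p^{m-1} + (p-1)p^{(m-2)/2}$, while for Type III it is $p^{m-1} - (p-1)p^{(m-2)/2}$; these are precisely the two cases encoded by $\epsilon = 1$ and $\epsilon = -1$. Subtracting one for the origin then yields \eqref{eqn-size-DQ}.

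I do not expect a serious obstacle: once full rank is used to fix $r = m$ and the admissible type is pinned down by the parity of $m$, the conclusion is a one-line substitution into Lemma \ref{Lemma_Klapper2}. The only points requiring care are remembering that $Q(0) = 0$ forces the ``$-1$'' correction from excluding the origin, and correctly reading off the special values $\eta(0) = 0$ and $\nu(0) = p-1$ at $\zeta = 0$.
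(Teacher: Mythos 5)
Your proposal is correct and follows exactly the route the paper intends: the paper gives no separate proof, stating only that the lemma ``follows directly from Lemma \ref{Lemma_Klapper2}'', and your argument (full rank $r=m$ from Bentness, parity of $m$ selecting Type II versus Types I/III, evaluation at $\zeta=0$ via $\eta(0)=0$ and $\nu(0)=p-1$, and the ``$-1$'' for excluding the origin) is precisely the omitted verification.
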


\begin{lemma}\label{Lemma-mainlemma-2}
Let  $Q(x)$  be a quadratic Bent function from $\gf_{p^m}$ to $\gf_p$. For any $b\in \gf_{p^m}$, define
$$\label{eqn_def_Nb}
D_{Q,b}=\{x\in \gf_{p^m}^*: Q(x)=0~ \mbox{and}~ \tr_1^m(bx)=0\}
$$
and
$$
N_b=|D_{Q,b}|.
$$
Then $N_b$ has the following distribution as $b$ runs through $\gf_{p^m}$:
\begin{eqnarray*}
N_b=\left\{
\begin{array}{ll}
p^{m-1}+\epsilon(p-1)p^{\frac{m-2}{2}}-1, &1 \textrm{~time}\\
p^{m-2}-1, &(p-1)\left(p^{m-1}-\epsilon p^{\frac{m-2}{2}}\right)\textrm{~times} \\
p^{m-2}+\epsilon (p-1)p^{\frac{m-2}{2}}-1, &p^{m-1}+\epsilon(p-1)p^{\frac{m-2}{2}}-1\textrm{~times}
\end{array} \right.\ \
\end{eqnarray*}
if $m$ is even, and otherwise
\begin{eqnarray*}
N_b=\left\{
\begin{array}{ll}
p^{m-1}-1, &1 \textrm{~time}\\
p^{m-2}-1, &p^{m-1}-1\textrm{~times} \\
p^{m-2}+(p-1)p^{\frac{m-3}{2}}-1, &\frac{p-1}{2}\left(p^{m-1}+p^{\frac{m-1}{2}}\right)\textrm{~times}\\
p^{m-2}-(p-1)p^{\frac{m-3}{2}}-1, &\frac{p-1}{2}\left(p^{m-1}-p^{\frac{m-1}{2}}\right)\textrm{~times} .
\end{array} \right.\ \
\end{eqnarray*}
\end{lemma}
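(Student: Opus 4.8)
The plan is to convert the counting of $N_b$ into a double additive-character sum and then exploit the Bent (full-rank) property of $Q$ through Lemmas \ref{Lemma_Klapper2} and \ref{lemma-fullrank}. Writing $n_b=N_b+1$ for the number of $x\in\gf_{p^m}$ (now including $x=0$) with $Q(x)=0$ and $\tr_1^m(bx)=0$, orthogonality of additive characters gives
$$n_b=\frac{1}{p^2}\sum_{y\in\gf_p}\sum_{z\in\gf_p}\sum_{x\in\gf_{p^m}}\omega_p^{yQ(x)+z\tr_1^m(bx)}.$$
First I would isolate the terms with $y=0$: the pair $(y,z)=(0,0)$ contributes $p^{m-2}$, and the terms $y=0,\ z\neq0$ contribute $0$ when $b\neq0$ (and a known constant when $b=0$, which recovers the first line via Lemma \ref{lemma-mainlemma_1}). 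The whole difficulty is therefore concentrated in the inner sum $W(y,z)=\sum_{x}\omega_p^{yQ(x)+z\tr_1^m(bx)}$ for $y\neq0$.

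Second, for $y\neq0$ the exponent $yQ(x)$ is again a quadratic form of full rank $m$, so by Lemma \ref{lemma-fullrank} $|W(y,z)|=p^{m/2}$; but I need the exact phase, not just the modulus. Here I would complete the square: since the symmetric bilinear form $B(x,w)=Q(x+w)-Q(x)-Q(w)$ is nondegenerate, there is a unique $w_b\in\gf_{p^m}$ with $B(x,w_b)=\tr_1^m(bx)$ for all $x$, and translating $x$ by $(z/y)w_b$ yields $W(y,z)=\omega_p^{-(z^2/y)Q^*(b)}C(y)$, where $C(y)=\sum_{x}\omega_p^{yQ(x)}$ and $Q^*(b):=Q(w_b)$. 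The map $b\mapsto w_b$ is an $\gf_p$-linear bijection, so $Q^*$ is itself a quadratic form of full rank $m$ (the dual of $Q$), and this is the object that will control everything.

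Third, I would sum $W(y,z)$ over $z$ and then over $y\neq0$. The $z$-sum $\sum_z\omega_p^{-(Q^*(b)/y)z^2}$ is elementary: it equals $p$ when $Q^*(b)=0$ and $\eta(-Q^*(b)/y)G$ otherwise, with $G$ the quadratic Gauss sum ($G^2=\eta(-1)p$). The remaining $y$-sums reduce to the two quantities $\sum_{y\neq0}C(y)$ and $\sum_{y\neq0}\eta(y)C(y)$, both of which I would evaluate by swapping the order of summation and applying Lemma \ref{Lemma_Klapper2}: the first collapses to the solution count of $Q(x)=0$, and the second to $G\sum_x\eta(Q(x))$, which Lemma \ref{Lemma_Klapper2} evaluates as $0$ for even $m$ and as $\eta(\mu)(p-1)p^{(m-1)/2}$ for odd $m$. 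Assembling these pieces shows that $N_b$ depends on $b$ only through whether $Q^*(b)$ is zero, a nonzero square, or a nonzero nonsquare, producing exactly the two distinct values in the even case and the three distinct nontrivial values in the odd case.

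Finally, the multiplicities follow by counting, via Lemma \ref{Lemma_Klapper2} applied to the full-rank dual form $Q^*$, how many $b\in\gf_{p^m}$ give each of these outcomes; the $b=0$ contribution is split off to give the first line. I expect the main obstacle to be twofold: carrying out the completion of the square cleanly enough to identify $Q^*$ and confirm it has full rank, and---more delicate---pinning down the \emph{type} (the sign $\epsilon$, respectively the quadratic class of $\mu$) of the dual form $Q^*$ relative to that of $Q$, since the stated multiplicities require that $Q^*$ and $Q$ share the same type. Keeping the even and odd cases separate throughout, because the Gauss sum $G$ enters quadratically in one and linearly in the other, is the other place where care is needed.
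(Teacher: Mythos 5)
Your proposal is correct, and I verified the key intermediate identities: completing the square does give $W(y,z)=\omega_p^{-(z^2/y)Q^*(b)}C(y)$ with $Q^*(b)=Q(w_b)$, the $z$-sum is $p$ or $\eta(-Q^*(b)/y)G$, and the two $y$-sums collapse exactly as you claim (to $p\,|\{x:Q(x)=0\}|-p^m$ and to $G\sum_x\eta(Q(x))$, the latter being $0$ for even $m$ and $\eta(\mu)(p-1)p^{(m-1)/2}$ for odd $m$ by Lemma \ref{Lemma_Klapper2}); assembling these reproduces every value and multiplicity in the statement. However, your route differs from the paper's at the central step. The paper also reduces everything to a dual form evaluated at $b$: it passes to dual bases, writes $N_b=N(0,0)-1$ where $N(0,0)$ counts common solutions of $Q(\bar x)=0$ and $\bar b\cdot\bar x=0$, and then simply \emph{cites} Proposition 3.4 of Klapper's paper \cite{Klapperodd} to get $N(0,0)$ as a function of $\hat Q(\bar b)$ for an explicitly written dual form $\hat Q$; the multiplicities then come from Lemma \ref{Lemma_Klapper2} applied to $\hat Q$, exactly as in your final step. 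So your character-sum computation is, in effect, a self-contained proof of the black-boxed Klapper proposition: it costs you a page of Gauss-sum bookkeeping but removes the external dependency and makes visible \emph{why} the answer depends on $b$ only through the square class of $Q^*(b)$. One reassurance on the obstacle you flag at the end: there is nothing delicate about the type of $Q^*$, because your own construction gives $Q^*=Q\circ L$ with $L\colon b\mapsto w_b$ an $\gf_p$-linear bijection, so $Q^*$ is equivalent to $Q$ by definition and hence of the same standard type (the paper makes the analogous remark about its $\hat Q$); with that observed, your multiplicity count via Lemma \ref{Lemma_Klapper2} goes through verbatim.
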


\begin{proof}
When $b=0$, it is clear that
\begin{eqnarray*}\label{eqn-N00-0}
N_b=N_0=|D_Q|.
\end{eqnarray*}
The value of $N_0$ is thus determined due to Lemma \ref{lemma-mainlemma_1}.
Therefore we only need to calculate $N_b$ for $b\in \gf_{p^m}^*$. To this end, we suppose that $\{\alpha_1,\alpha_2,\cdots,\alpha_{m}\}$ and  $\{\beta_1,\beta_2,\cdots,\beta_{m}\}$ are dual basis
of $\gf_{p^m}$ over $\gf_p$. Using these bases, we write
$x=x_1\beta_1+x_2\beta_2+\cdots+x_m\beta_m$ and $b=b_1\alpha_1+b_2\alpha_2+\cdots+b_m\alpha_m$ for $x,b\in \gf_{p^m}$, where
$\bar{x}=(x_1,x_2,\cdots,x_m)\in \gf_p^m$ and $\bar{b}=(b_1,b_2,\cdots,b_m)\in \gf_p^m$.
Then we have
\begin{eqnarray}\label{eqn-N00-00}
N_b=N(0,0)-1,
\end{eqnarray}
where $N(0,0)$ is the number of solutions $\bar{x}\in \gf_p^m$ to the  equation system
\begin{eqnarray*}\label{eqn-N00-1}
\left\{
\begin{array}{ll}
Q(\bar{x})=0\\
\bar{b}\cdot \bar{x}=0
\end{array} \right.\ \
\end{eqnarray*}
where $\bar{b}\cdot \bar{x}=b_1x_1+b_2x_2+\cdots+b_mx_m$ is the inner product of the vectors $\bar{b}$ and $\bar{x}$.
Let
\begin{eqnarray*}\label{eqn-N00-1}
\hat{Q}(\bar{x})=\left\{
\begin{array}{ll}
B_{m}(\bar{x}),& \mbox{~if~}Q(x)\mbox{~is~equivalent~ to~Type~I}  \\
B_{m-1}(\bar{x})+\frac{ x_m^2}{4\mu} , &\mbox{~if~}Q(x)\mbox{~is~equivalent~ to~Type~II}\\
B_{m-2}(\bar{x})+\frac{x_{m-1}^2}{4}-\frac{x_{m}^2}{4\varsigma} , &\mbox{~if~}Q(x)\mbox{~is~equivalent~ to~Type~III}
\end{array} \right.\ \
\end{eqnarray*}
where  $\mu\in \{1, \varsigma\}$ and $\varsigma$ is a fixed nonsquare in $\gf_{p}$, as defined in Lemma \ref{Lemma_Klapper2}. Note that $\hat{Q}(\bar{x})$ is equivalent to $Q(\bar{x})$ under a change of coordinates.
Thus $\hat{Q}(\bar{x})$ and $Q(\bar{x})$ are equivalent to the same standard type.
Thanks to Proposition 3.4 in \cite{Klapperodd}, we have
\begin{eqnarray}\label{eqn-N00-2}
N(0,0)=\left\{
\begin{array}{ll}
p^{m-2}+\epsilon(p-1)p^{\frac{m-2}{2}}, &\mbox{~if~} \hat{Q}(\bar{b})=0\\
p^{m-2}, &\mbox{~if~} \hat{Q}(\bar{b})\neq 0
\end{array} \right.\ \
\end{eqnarray}
if $m$ is even and otherwise
\begin{eqnarray}\label{eqn-N00-3}
N(0,0)=\left\{
\begin{array}{ll}
p^{m-2}, &\mbox{~if~}  \hat{Q}(\bar{b})=0\\
p^{m-2}+\eta(\mu \hat{Q}(\bar{b})) (p-1)p^{\frac{m-3}{2}}, &\mbox{~if~}  \hat{Q}(\bar{b})\neq 0
\end{array} \right.\ \
\end{eqnarray}
where $\eta$ is the quadratic  character of $\gf_p$ and $\eta(0)$ is assumed to be 0. By (\ref{eqn-N00-00}), the value distribution of $N_b$ for even $m$ (resp., odd $m$) then follows from Equation (\ref{eqn-N00-2}) (resp., (\ref{eqn-N00-3})),
and the number of solutions $\bar{b}\in \gf_{p}^m$ to $\hat{Q}(\bar{b})=\zeta$ given in Lemma \ref{Lemma_Klapper2}, where $\zeta\in \gf_p$.
\end{proof}

\section{Linear Codes with Two or Three Weights From Quadratic Bent Functions}\label{sec-main}
In this section, inspired by the work of Ding \emph{et al.} \cite{Ding1,Ding2}, we shall construct
several classes of linear codes with two or three weights employing
quadratic forms over finite field $\gf_p$. Before doing this, we give a brief
introduction to the construction of linear codes proposed by Ding \textit{et al.} recently \cite{Ding1}, \cite{Ding2}.

Let $D=\{d_0,d_1,\cdots,d_{n-1}\}$ be any $n$-subset of $\gf_{p^m}$. Define a linear code $\C_D$ of length $n$ from $D$ as follows:
\begin{eqnarray}\label{eqn-defn-CD}
\C_D:=\{\bc_b: b\in \gf_{p^m}\},
\end{eqnarray}
where
\begin{eqnarray}\label{eqn-def-trace}
{\bf c}_b=(\tr_1^m(bd_0), \tr_1^m(bd_1), \cdots, \tr_1^m(bd_{n-1})).
\end{eqnarray}

Clearly, the dimension of $\C_D$ is at most $m$. In general, it is difficult to determine the
minimal distance of $\C_D$ not to mention the weight distribution. However,
the weight distribution of $\C_D$ can be settled in some cases \cite{Ding1}, \cite{Ding2}.
For example, when $D=\{x\in \gf_{p^m}^*: \tr_1^m(x^2)=0\}$ and $p$ is an odd prime,
the weight distribution of $\C_D$ was completely determined in \cite{Ding2}.
It turns out in \cite{Ding2} that $\C_D$ is two-weight for even $m$ and three-weight for odd $m$.
Note that $\tr_1^m(x^2)$ is a quadratic Bent function over $\gf_p$. This inspires us
to construct linear code  from general quadratic Bent functions  over $\gf_p$.

Let $Q(x)$ be a quadratic Bent function from $\gf_{p^m}$ to $\gf_p$. Define
\begin{eqnarray}\label{eqn-D-Q}
D_Q=\{x\in \gf_{p^m}^*: Q(x)=0\},
\end{eqnarray}
and a linear code $\C_{D_Q}$ according to (\ref{eqn-defn-CD}).
For the code $\C_{D_Q}$, we have the following results.

\begin{table}[ht]
\parbox{6.5 cm}{\caption{The weight distribution of $\C_{D_Q}$ for odd $m$.}}\label{tab-CG2}
\centering
\begin{tabular}{|l|l|}
\hline
Weight $w$    & No. of codewords $A_w$  \\ \hline
$0$                                                        & $1$ \\ \hline
$(p-1)(p^{m-2}-p^{\frac{m-3}{2}})$          & $\frac{p-1}{2}(p^{m-1}+p^{\frac{m-1}{2}})$ \\ \hline
$(p-1)p^{m-2}$                               & $p^{m-1}-1$ \\ \hline
$(p-1)(p^{m-2}+p^{\frac{m-3}{2}})$          &  $\frac{p-1}{2}(p^{m-1}-p^{\frac{m-1}{2}})$  \\ \hline
\end{tabular}
\end{table}

\begin{table}[ht]
\caption{The weight distribution of $\C_{D_Q}$ for even $m$.}\label{tab-CG2}
\centering
\begin{tabular}{|l|l|}
\hline
Weight $w$    & No. of codewords $A_w$  \\ \hline
$0$                                                        & $1$ \\ \hline
$(p-1)p^{m-2}$                               & $p^{m-1}+\epsilon(p-1)p^{\frac{m-2}{2}}-1$  \\ \hline
$(p-1)(p^{m-2}+\epsilon p^{\frac{m-2}{2}})$          & $(p-1)(p^{m-1}-\epsilon p^{\frac{m-2}{2}})$  \\ \hline
\end{tabular}
\end{table}

\begin{theorem}\label{Th-Main1}
If $m$ is odd,  then
$\C_{D_Q}$ is a three-weight $[p^{m-1}-1, m]$  code over $\gf_p$ with the weight distribution in Table 1.
\end{theorem}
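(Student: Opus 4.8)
The plan is to obtain Table 1 as a direct corollary of Lemma \ref{Lemma-mainlemma-2}, since the Hamming weight of a codeword of $\C_{D_Q}$ is an affine function of the counting quantity $N_b$ computed there. Writing $D_Q=\{d_0,d_1,\ldots,d_{n-1}\}$, where by the odd-$m$ case of Lemma \ref{lemma-mainlemma_1} the length is $n=|D_Q|=p^{m-1}-1$, the codeword $\bc_b$ has a zero in position $i$ exactly when $\tr_1^m(b d_i)=0$. Hence the number of zero coordinates of $\bc_b$ is precisely $N_b=|\{x\in\gf_{p^m}^*:Q(x)=0\text{ and }\tr_1^m(bx)=0\}|$, so that
$$
\wt(\bc_b)=n-N_b=(p^{m-1}-1)-N_b .
$$

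Next I would substitute the four values of $N_b$ from the odd-$m$ part of Lemma \ref{Lemma-mainlemma-2} into this identity. The value $N_0=p^{m-1}-1$ (occurring once, at $b=0$) gives $\wt(\bc_0)=0$; the value $p^{m-2}-1$ gives weight $(p-1)p^{m-2}$; and the two values $p^{m-2}\pm(p-1)p^{\frac{m-3}{2}}-1$ give weights $(p-1)\bigl(p^{m-2}\mp p^{\frac{m-3}{2}}\bigr)$. Pairing each resulting weight with the frequency attached to the corresponding value of $N_b$ reproduces exactly the three nonzero rows of Table 1, and these three weights are distinct for odd $m\geq 3$, so $\C_{D_Q}$ is three-weight as claimed.

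Finally I would pin down the parameters and justify that the counts over $b$ really are the codeword counts $A_w$. The length is $n=p^{m-1}-1$. For the dimension, note that $b\mapsto\bc_b$ is an $\gf_p$-linear surjection onto $\C_{D_Q}$, so $\dim\C_{D_Q}=m-\dim\ker$; since $\wt(\bc_b)=0$ forces $\bc_b$ to be the zero codeword, the kernel equals the set of $b$ giving weight $0$, which by the previous step is only $b=0$. Thus the kernel is trivial, $\dim\C_{D_Q}=m$, and the map is a bijection, so the frequencies tallied over $b\in\gf_{p^m}$ transfer verbatim to the $A_w$ (as a consistency check, the four frequencies sum to $p^m$). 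The substantive work—evaluating the joint solution counts of $Q(x)=0$ and $\tr_1^m(bx)=0$—has already been done in Lemma \ref{Lemma-mainlemma-2}, so no serious obstacle remains; the only point needing a little care is the injectivity of $b\mapsto\bc_b$, which I obtain for free from the fact that $b=0$ is the unique preimage of the zero codeword.
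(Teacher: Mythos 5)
Your proposal is correct and follows essentially the same route as the paper: express $\wt(\bc_b)=|D_Q|-N_b$ and read the three nonzero weights and their frequencies off Lemmas \ref{lemma-mainlemma_1} and \ref{Lemma-mainlemma-2}, then deduce the dimension from the weight distribution. The only difference is that you spell out the injectivity of $b\mapsto\bc_b$ and the frequency check, which the paper leaves implicit.
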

\begin{proof}
According to the definition of $\C_{D_Q}$, its length is equal to $|D_Q|$.
By Lemma \ref{lemma-mainlemma_1},  $|D_Q|=p^{m-1}-1$ when $m$ is odd.
For any codeword $\bc_b$ in $\C_{D_Q}$, according to the definition, its Hamming weight is equal to
\begin{eqnarray*}
{\textrm{WT}}({\bf c}_{b})&=&|D_Q|-|D_{Q,b}|
\end{eqnarray*}
where
$$
D_{Q,b}=\{x\in \gf_{p^m}^*: Q(x)=0~ \mbox{and}~ \tr_1^m(bx)=0\}.
$$
Then, the weight distribution of $\C_{D_Q}$ follows from Lemmas \ref{lemma-mainlemma_1}
and  \ref{Lemma-mainlemma-2}. Finally, the dimension of $\C_{D_Q}$
follows from its weight distribution.
\end{proof}

\begin{theorem}\label{Th-Main2}
If $m$ is even,  then
$\C_{D_Q}$ is a two-weight $[p^{m-1}+\epsilon(p-1)p^{\frac{m-2}{2}}-1, m]$  code over $\gf_p$ with the weight distribution in Table 2,
where $\epsilon=1$ if $Q(x)$ is equivalent to Type I and $\epsilon=-1$ if $Q(x)$ is equivalent to Type III.
\end{theorem}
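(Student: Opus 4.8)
The plan is to run the proof of Theorem \ref{Th-Main1} again, this time feeding in the even-$m$ branches of Lemmas \ref{lemma-mainlemma_1} and \ref{Lemma-mainlemma-2}; the only change from the odd case is which value distribution of $N_b$ gets substituted. First I would record that, by the definition in (\ref{eqn-defn-CD}), the length of $\C_{D_Q}$ equals $|D_Q|$, and the even-$m$ case of Lemma \ref{lemma-mainlemma_1} gives $|D_Q|=p^{m-1}+\epsilon(p-1)p^{\frac{m-2}{2}}-1$, which is exactly the claimed length.

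Next, for each $b\in\gf_{p^m}$ the Hamming weight of $\bc_b$ counts the coordinates where $\tr_1^m(bx)\neq 0$, so that ${\textrm{WT}}(\bc_b)=|D_Q|-N_b$ with $N_b=|D_{Q,b}|$. I would then substitute the three values of $N_b$ supplied by the even-$m$ part of Lemma \ref{Lemma-mainlemma-2}. The value $N_b=p^{m-1}+\epsilon(p-1)p^{\frac{m-2}{2}}-1$, which occurs exactly once (namely at $b=0$), yields weight $0$; the value $N_b=p^{m-2}+\epsilon(p-1)p^{\frac{m-2}{2}}-1$ yields weight $(p-1)p^{m-2}$ with multiplicity $p^{m-1}+\epsilon(p-1)p^{\frac{m-2}{2}}-1$; and the value $N_b=p^{m-2}-1$ yields weight $(p-1)(p^{m-2}+\epsilon p^{\frac{m-2}{2}})$ with multiplicity $(p-1)(p^{m-1}-\epsilon p^{\frac{m-2}{2}})$. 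These are precisely the two nonzero rows of Table 2, so $\C_{D_Q}$ is two-weight.

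Finally, the dimension follows by counting codewords: summing the multiplicities in Table 2 gives $1+[p^{m-1}+\epsilon(p-1)p^{\frac{m-2}{2}}-1]+(p-1)(p^{m-1}-\epsilon p^{\frac{m-2}{2}})=p^m$, and since the map $b\mapsto\bc_b$ is $\gf_p$-linear the code has $p^{k}$ codewords where $k$ is its dimension, forcing $k=m$. There is no serious obstacle here: the change-of-coordinates reduction and the solution-counting over $\gf_p$ were already discharged in Lemma \ref{Lemma-mainlemma-2}, so what remains is purely bookkeeping. The one point worth a quick check is that the two nonzero weights $(p-1)p^{m-2}$ and $(p-1)(p^{m-2}+\epsilon p^{\frac{m-2}{2}})$ are distinct and positive, so that $\C_{D_Q}$ genuinely has two weights; this holds throughout the non-degenerate parameter range and is the place where the sign $\epsilon$ must be tracked consistently through every substitution.
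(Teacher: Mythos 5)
Your proposal is correct and follows exactly the route the paper intends: the paper's own proof of Theorem~\ref{Th-Main2} is just the remark that it is ``similar to that of Theorem~\ref{Th-Main1},'' and what you write out is precisely that argument with the even-$m$ branches of Lemmas~\ref{lemma-mainlemma_1} and~\ref{Lemma-mainlemma-2} substituted in; your arithmetic matching the three values of $N_b$ to the rows of Table~2 and the codeword count $p^m$ checks out. The only detail worth making explicit is that the dimension argument rests on the weight-$0$ row having multiplicity exactly $1$ (so $b\mapsto\bc_b$ has trivial kernel), which is implicit in your appeal to the ``$1$ time'' entry of Lemma~\ref{Lemma-mainlemma-2}.
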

\begin{proof}
The proof of this theorem is similar to that of Theorem \ref{Th-Main1}.
\end{proof}

Theorems \ref{Th-Main1} and \ref{Th-Main2} imply that any quadratic Bent function over $\gf_p$ naturally gives a two-weight or three-weight linear code. In the remainder of this section, we shall introduce several classes of linear codes from some known quadratic Bent functions.

\subsection{Linear Codes From Some Known Planar Functions}
A  function $\pi(x)$ from $\gf_{p^m}$ to $\gf_{p^m}$ is referred to as perfect
nonlinear  if
$$
\max_{a \in \gf_{p^m}^*} \max_{b \in \gf_{p^m}} |\{x \in \gf_{p^m}: \pi(x+a)-\pi(x)=b\}| =1.
$$
A perfect nonlinear function from a finite field to itself is also called a planar function
in finite geometry \cite{Coulter-Matthews}. Some known quadratic planar functions from $\gf_{p^m}$ to $\gf_{p^m}$ are summarized as follows
\begin{itemize}
\item [(a)] $\pi(x)=x^2$;
\item [(b)] $\pi(x)=x^{p^k+1}$ where $m/\gcd(m,k)$ is odd \cite{Dembowski-Ostrom};
\item [(c)] $\pi(x)=x^{10}-x^6-x^2$ where $p=3$ and $m$ is odd \cite{Coulter-Matthews};
\item [(d)]$\pi(x)=x^{10}-ux^6-u^2x^2$ where $p=3$, $m$ is odd and $u\in\gf_{p^m}^*$ \cite{Ding-Yuan};
\item [(e)]$\pi(x)=x^{p^s+1}-u^{p^k-1}x^{p^k+p^{2k+s}}$ where $m=3k$,  $\gcd(k,3)=1$, $k-s \equiv 0~(\bmod 3)$, $s\neq k$ and $k/\gcd(k,s)$ is odd, and $u$ is a primitive element of $\gf_{p^{m}}$ \cite{Zha}.
\end{itemize}

It is well known that every component function $\tr_1^m(c\pi(x)), c\in\gf_{p^m}^*$ of a planar function $\pi(x)$ over $\gf_{p^{m}}$ is a Bent function \cite{CD04}. Thus, for any planar function $\pi(x)$ listed as above, one obtains that $Q(x)=\tr_1^m(c\pi(x))$ is a quadratic Bent function over $\gf_p$. Using
these planar functions, we can obtain linear codes with two or three weights
according to Theorems \ref{Th-Main1} and \ref{Th-Main2}.

\begin{corollary}\label{coro-planar}
Let $\pi(x)$ be any planar function listed above and $Q(x)=\tr_1^m(c\pi(x))$, where $c\in\gf_{p^m}^*$. Then
\begin{enumerate}
  \item  $\C_{D_Q}$ is a three-weight $[p^{m-1}-1, m]$  code over $\gf_p$ with the weight distribution in Table 1 if $m$ is odd; and
  \item $\C_{D_Q}$ is a  two-weight $[p^{m-1}+\epsilon (p-1)p^{\frac{m-2}{2}}-1, m]$  code over $\gf_p$ with the weight distribution in Table 2  if $m$ is even. Furthermore, $\epsilon=\eta(c)(-1)^{(\frac{p-1}{2})^2\frac{m}{2}+1}$ for the  planar functions listed in (a) and (b).
\end{enumerate}
\end{corollary}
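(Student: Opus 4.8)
```latex
The plan is to invoke Theorems \ref{Th-Main1} and \ref{Th-Main2} directly, since each planar function $\pi(x)$ in the list yields, via $Q(x)=\tr_1^m(c\pi(x))$ with $c\in\gf_{p^m}^*$, a quadratic Bent function over $\gf_p$. The fact that each component function of a planar function is Bent is quoted from \cite{CD04}, and these $Q(x)$ are clearly quadratic (each listed $\pi$ is a Dembowski--Ostrom-type polynomial, i.e.\ a sum of terms $x^{p^i+p^j}$, so $\tr_1^m(c\pi(x))$ is a homogeneous degree-two polynomial in the coordinates $x_1,\dots,x_m$). Hence $Q$ has full rank $m$ by Lemma \ref{lemma-fullrank}, and the hypotheses of Theorems \ref{Th-Main1} and \ref{Th-Main2} are met. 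Parts (1) and (2) then follow immediately: for odd $m$ we get the three-weight code with the distribution of Table 1, and for even $m$ the two-weight code with the distribution of Table 2.

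The only genuinely new content is the explicit value of $\epsilon$ for the planar functions in cases (a) and (b), i.e.\ for $\pi(x)=x^{p^k+1}$ with $m/\gcd(m,k)$ odd (case (a) being $k=0$). Recall from Lemma \ref{lemma-mainlemma_1} that $\epsilon=1$ if $Q$ is equivalent to Type I and $\epsilon=-1$ if it is equivalent to Type III. To determine which type $Q(x)=\tr_1^m(cx^{p^k+1})$ falls into, I would compute a Walsh-type sum and compare its sign with the sign predicted by each canonical type. Concretely, for even $m$ Lemma \ref{Lemma_Klapper2} gives the number of solutions to $Q(\bar x)=0$ as $p^{m-1}+\epsilon(p-1)p^{m/2-1}$, so $\epsilon$ is read off from the exponential sum $\sum_{x\in\gf_{p^m}}\omega_p^{Q(x)}$. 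The plan is to evaluate this Gauss-type sum for $Q(x)=\tr_1^m(cx^{p^k+1})$ explicitly.

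The main obstacle is precisely this evaluation of $\sum_{x\in\gf_{p^m}}\omega_p^{\tr_1^m(cx^{p^k+1})}$ and the extraction of its sign. Such sums are classical: one uses that $x\mapsto x^{p^k+1}$ is a quadratic form whose associated Gauss sum factors through the quadratic character $\eta$ of $\gf_p$ and a quadratic Gauss sum, yielding a sign governed by $\eta(c)$ together with a factor recording the number of variables and the arithmetic of $p$ modulo $4$. The quadratic Gauss sum over $\gf_p$ contributes the familiar $(-1)$-power that depends on $p\bmod 4$, which is encoded here in $(-1)^{((p-1)/2)^2\, m/2}$; the $\eta(c)$ factor records the dependence on the scalar $c$. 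Assembling these contributions and matching the resulting sign against the Type I versus Type III dichotomy should yield $\epsilon=\eta(c)(-1)^{((p-1)/2)^2 \frac{m}{2}+1}$. The care needed is purely in bookkeeping the Gauss-sum sign conventions and the condition $m/\gcd(m,k)$ odd, which guarantees $x^{p^k+1}$ is a planar (hence full-rank) quadratic form so that the clean two-valued picture of Table 2 applies.
```
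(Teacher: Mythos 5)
Your reduction of parts (1) and (2) to Theorems \ref{Th-Main1} and \ref{Th-Main2} is exactly what the paper does, including the appeal to \cite{CD04} for Bentness of the component functions and the observation that each listed $\pi$ is a Dembowski--Ostrom polynomial, so that part of the proposal is fine. Where you diverge is in pinning down $\epsilon$ for families (a) and (b). You propose to evaluate the Weil sum $\sum_{x\in\gf_{p^m}}\omega_p^{\tr_1^m(cx^{p^k+1})}$ directly and extract its sign from a Gauss-sum factorization. The paper instead reduces (b) to (a): since $m/\gcd(m,k)$ odd forces $\gcd(p^m-1,p^k+1)=2$, the map $x\mapsto x^{p^k+1}$ on $\gf_{p^m}^*$ is equidistributed with $x\mapsto x^2$, hence $|\{x\in\gf_{p^m}^*:\tr_1^m(cx^{p^k+1})=0\}|=|\{x\in\gf_{p^m}^*:\tr_1^m(cx^2)=0\}|$, and $\epsilon$ is then read off from \eqref{eqn-size-DQ} with no new exponential-sum work; the $x^2$ case itself is quoted from the proof of Theorem 2 in \cite{Ding2}. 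Your route is viable (under the same gcd condition the sum for $x^{p^k+1}$ collapses to the one for $x^2$ by substitution), but it is more work than necessary, and the gcd observation would simplify your argument considerably.

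The one substantive gap is that the formula $\epsilon=\eta(c)(-1)^{(\frac{p-1}{2})^2\frac{m}{2}+1}$ --- which you correctly identify as the only genuinely new content of the corollary --- is never actually derived: your text says the Gauss-sum bookkeeping ``should yield'' it. That bookkeeping is precisely where the sign conventions (the power of $\sqrt{-1}$ in the quadratic Gauss sum over $\gf_{p^m}$, the passage from $\widehat{Q}(0)$ to the count $|D_Q|$ via summing over $y\in\gf_p^*$, and the matching against Type I versus Type III) can go wrong, so the claim is asserted rather than proved. To close it you need either to carry out that computation explicitly, or to do as the paper does: cite the known evaluation for $\tr_1^m(cx^2)$ and transfer it to $x^{p^k+1}$ via $\gcd(p^m-1,p^k+1)=2$. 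A small further caution: the $\eta$ in the target formula must be the quadratic character of $\gf_{p^m}$ (it is applied to $c\in\gf_{p^m}^*$), not the character of $\gf_p$ used in Lemma \ref{Lemma_Klapper2}; your write-up conflates the two.
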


\begin{proof}
According to Theorem \ref{Th-Main2}, we only need to prove $\epsilon=\eta(c)(-1)^{(\frac{p-1}{2})^2\frac{m}{2}+1}$ for the planar functions listed in (a) and (b).
When $\pi(x)=x^2$, similar as the proof of Theorem 2 in \cite{Ding2}, one can easily obtain $\epsilon=\eta(c)(-1)^{(\frac{p-1}{2})^2\frac{m}{2}+1}$
for $Q(x)=\tr_1^m(cx^2)$.
When $\pi(x)=x^{p^{k}+1}$  where $m/\gcd(m,k)$ is odd. Note that $\gcd(p^m-1, p^k+1)=2$. We have
$$
|\{x\in \gf_{p^m}^*: \tr_1^m(cx^{p^{k}+1})=0\}|=|\{x\in \gf_{p^m}^*: \tr_1^m(cx^{2})=0\}|.
$$
By Lemma \ref{lemma-mainlemma_1},  $\epsilon=\eta(c)(-1)^{(\frac{p-1}{2})^2\frac{m}{2}+1}$ for $Q(x)=\tr_1^m(cx^{p^{k}+1})$.
\end{proof}


It will be nice if the sign of $\epsilon$ for the planar function given in (e) with even $m$  can be determined.
This can be done if we can determine the equivalent type of the corresponding Bent function.

\begin{example}\label{example-planar-odd}
Let $p=3$, $m=5$, and $Q(x)=\tr_1^m(x^{10}-x^6-x^2)$. The Magma program shows that
$\C_{D_Q}$ has parameters $[80,5,48]$ and weight enumerator $1+90z^{48}+80z^{54}+72z^{60}$, which
agrees with the result in Corollary \ref{coro-planar}.
\end{example}

\begin{example}\label{example-planar-even}
Let $p=3$, $m=6$, $\beta$ be a primitive element of $\gf_{3^6}$. When $Q(x)=\tr_1^m(x^{p^2+1})$, the Magma program shows that
$\C_{D_Q}$ has parameters $[224,6,144]$ and weight enumerator $1+504z^{144}+224z^{162}$.
When  $Q(x)=\tr_1^m(\beta x^{p^2+1})$, the Magma program shows that
$\C_{D_Q}$ has parameters $[260,6,162]$ and weight enumerator $1+260^{162}+468z^{180}$.
The computer experimental data agrees with the result in Corollary \ref{coro-planar}.
\end{example}

\subsection{Linear Codes From Gold Class of Bent Functions}

 Let $p$ be an odd prime and $c=\alpha^t\in\gf_{p^m}^*$, where $\alpha$ is a primitive element of $\gf_{p^m}$ and $t$ is an integer with $0\leq t\leq p^m-2$. Then for any $j\in\{1,2,\cdots,m\}$, Helleseth and Kholosha in \cite{Helleseth-K2006} proved that the quadratic function
 \begin{eqnarray}\label{eq-Qgold}
  Q(x)=\tr_1^{m}(cx^{p^{j}+1})
 \end{eqnarray}
  is a Bent function if and only if
  \begin{eqnarray}\label{eq-gold-condi}
  p^{\gcd(2j,m)}-1\nmid\frac{p^m-1}{2}-t(p^j-1).
 \end{eqnarray}

 \begin{corollary} \label{cor-Goldcase}
 Let $Q(x)$ be defined as \eqref{eq-Qgold} and it satisfies  \eqref{eq-gold-condi}. Then
$\C_{D_Q}$ is a three-weight $[p^{m-1}-1, m]$ code over $\gf_p$ with the weight distribution in Table 1 if $m$ is odd, and for even $m$, $\C_{D_Q}$ is a  two-weight $[p^{m-1}+\epsilon (p-1)p^{\frac{m-2}{2}}-1, m]$  code over $\gf_p$ with the weight distribution in Table 2.
\end{corollary}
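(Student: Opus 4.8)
The plan is to recognize that Corollary \ref{cor-Goldcase} is almost an immediate consequence of the two main theorems, so the only real content is verifying that the hypotheses of Theorems \ref{Th-Main1} and \ref{Th-Main2} are met. First I would observe that the function $Q(x)=\tr_1^m(cx^{p^j+1})$ in \eqref{eq-Qgold} is a quadratic form over $\gf_p$: since $x^{p^j+1}$ is a monomial whose exponent is the sum of two powers of $p$, each component $\tr_1^m(cx^{p^j+1})$ expands, in coordinates with respect to a fixed $\gf_p$-basis of $\gf_{p^m}$, into a homogeneous polynomial of degree two. This is the standard Dembowski--Ostrom observation and should be stated explicitly so that $Q$ genuinely falls within the framework of Section \ref{sec-pre}.

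Second, I would invoke the result of Helleseth and Kholosha from \cite{Helleseth-K2006}: under the arithmetic condition \eqref{eq-gold-condi}, namely $p^{\gcd(2j,m)}-1 \nmid \frac{p^m-1}{2}-t(p^j-1)$ with $c=\alpha^t$, the function $Q$ is a Bent function. Combining this with the previous step, $Q$ is a \emph{quadratic} Bent function. By Lemma \ref{lemma-fullrank} and the remark following it, a quadratic form over $\gf_p$ is Bent precisely when it has full rank $m$, so $Q$ is a quadratic Bent function of full rank, which is exactly the hypothesis required by Theorems \ref{Th-Main1} and \ref{Th-Main2}.

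Third, the conclusion then follows by direct application of the two theorems. If $m$ is odd, Theorem \ref{Th-Main1} yields that $\C_{D_Q}$ is a three-weight $[p^{m-1}-1,\,m]$ code with the weight distribution of Table 1. If $m$ is even, Theorem \ref{Th-Main2} gives that $\C_{D_Q}$ is a two-weight $[p^{m-1}+\epsilon(p-1)p^{\frac{m-2}{2}}-1,\,m]$ code with the weight distribution of Table 2, where $\epsilon$ records whether $Q$ is equivalent to Type I or Type III. No further computation of the weight distribution is needed, since all the enumeration was carried out once and for all in Lemmas \ref{lemma-mainlemma_1} and \ref{Lemma-mainlemma-2}.

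The only genuinely delicate point — and the one I would expect to be the main obstacle — is pinning down which of the two equivalence types $Q$ belongs to when $m$ is even, i.e.\ determining the explicit sign of $\epsilon$. Unlike the pure-square and general Gold cases of Corollary \ref{coro-planar}, where $\gcd(p^m-1,p^k+1)=2$ lets one reduce $\tr_1^m(cx^{p^k+1})$ to $\tr_1^m(cx^2)$ and read off $\epsilon=\eta(c)(-1)^{(\frac{p-1}{2})^2\frac{m}{2}+1}$, here the exponent $p^j+1$ together with the flexibility in $t$ and $j$ makes the square/nonsquare structure of $D_Q$ less transparent. For this reason I would state the corollary exactly as written, leaving $\epsilon \in \{1,-1\}$ unspecified, and simply note (as the paper does for case (e)) that determining the precise value of $\epsilon$ would require identifying the canonical type of $Q$; this does not affect the fact that the code is two-weight with parameters of the stated shape.
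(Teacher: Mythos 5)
Your proposal is correct and follows exactly the route the paper intends: the corollary is stated without proof precisely because it is a direct application of Theorems \ref{Th-Main1} and \ref{Th-Main2} once the Helleseth--Kholosha criterion \eqref{eq-gold-condi} certifies that the quadratic form $Q(x)=\tr_1^m(cx^{p^j+1})$ is Bent (equivalently, of full rank). Your added care in noting that $\epsilon$ is left unspecified also matches the paper, which only resolves the sign later (e.g., for the Kasami case in Corollary \ref{coro-kasa}) via the Walsh transform at zero.
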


Observe that the Gold class of quadratic Bent functions defined by \eqref{eq-Qgold} covers several known cases:
 \begin{enumerate}
   \item Sidelnikov Bent function: when $j=m$, then $Q(x)$ is reduced to $Q(x)=\tr_1^{m}(cx^2)$;
   \item Kumar-Moreno Bent function: Kumar and Moreno in \cite{Kumar-Moreno} showed that $f(x)=\tr_1^{m}(x^{p^{k}+1})$ is a Bent function, where $m/\gcd(m,k)$ is odd and $c\in\gf_{p^m}^*$.
   \item Kasami Bent function: when $j=m/2$, then $Q(x)$ is reduced to $Q(x)=\tr_1^{m}(cx^{p^{m/2}+1})$ which is a Bent function if $c+c^{p^{m/2}}\ne 0$ \cite{Liu}.
 \end{enumerate}

\begin{remark}
The  Sidelnikov Bent function and the Kumar-Moreno Bent function are exactly the Bent functions
from the planar functions $\pi(x)=x^2$ and $\pi(x)=x^{p^k+1}$ mentioned in above subsection.
\end{remark}

When $m$ is even, one should also note that the sign of $\epsilon$ can be determined by the value of the Walsh transform of $Q(x)$ at the zero point. Let
$$
\mathbb{N}_i=|\{x\in\gf_{p^m}:Q(x)=0\}|
$$
for $i=0,1,\cdots,p-1$, then
$$
\widehat{Q}(0)=\sum_{x\in\gf_{p^m}}\omega_p^{Q(x)}=\sum_{i=0}^{p-1}\mathbb{N}_i\omega_p^{i}.
$$ Thus, the values of $\mathbb{N}_i$ for $i=0,1,\cdots, p-1$ can be determined by the value of $\widehat{Q}(0)$ and the well known fact that the polynomial $1+x+x^2+\cdots+x^{p-1}$ is irreducible over the rational number field. Therefore, the sign of $\epsilon$ can be determined by comparing the values of $\mathbb{N}_0$ and $|D_Q|$ given as in \eqref{eqn-size-DQ}. This fact implies that the sign of $\epsilon$ in Corollary \ref{cor-Goldcase} can be determined based on Lemma 2 given in \cite{Helleseth-K2006} for any given parameters $p,n,j$ and $c$. Using this method,  the sign of $\epsilon$ for the Kasami Bent function can be directly determined as follows.
 \begin{corollary}\label{coro-kasa}
Let $m$ be even and $Q(x)=\tr_1^{m}(cx^{p^{m/2}+1})$ with $c+c^{p^{m/2}}\ne 0$.
Then $\C_{D_Q}$ is a two-weight $[p^{m-1}+\epsilon (p-1)p^{\frac{m-2}{2}}-1, m]$ code over $\gf_p$ with the weight distribution in Table 2 where $\epsilon=-1$.
\end{corollary}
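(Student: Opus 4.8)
The plan is to invoke Theorem \ref{Th-Main2} to obtain everything except the sign of $\epsilon$, and then to pin down $\epsilon=-1$ by evaluating the single Walsh coefficient $\widehat{Q}(0)=\sum_{x\in\gf_{p^m}}\omega_p^{Q(x)}$. First I would record that, for even $m$, this coefficient is real and carries exactly the sign $\epsilon$. Writing $\mathbb{N}_\zeta$ for the number of $x\in\gf_{p^m}$ with $Q(x)=\zeta$, Lemma \ref{Lemma_Klapper2} gives $\mathbb{N}_\zeta=p^{m-1}+\epsilon\,\nu(\zeta)p^{m/2-1}$ (Type~I being $\epsilon=1$ and Type~III being $\epsilon=-1$), so that
\[
\widehat{Q}(0)=\sum_{\zeta\in\gf_p}\mathbb{N}_\zeta\,\omega_p^{\zeta}
=\epsilon\,p^{m/2-1}\sum_{\zeta\in\gf_p}\nu(\zeta)\omega_p^{\zeta}
=\epsilon\,p^{m/2},
\]
using $\sum_{\zeta}\omega_p^\zeta=0$ and $\sum_\zeta\nu(\zeta)\omega_p^\zeta=p$. (Equivalently, one may compare $\mathbb{N}_0=|D_Q|+1$ with the count in Lemma \ref{lemma-mainlemma_1}.) Thus $\epsilon$ is simply the sign of $\widehat{Q}(0)/p^{m/2}$, and it remains to compute $\widehat{Q}(0)$ for the Kasami function.

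Next I would reduce the defining trace. Put $q=p^{m/2}$, so that $\gf_{p^m}$ is a quadratic extension of $\gf_q$ and $x^{q^2}=x$. By transitivity, $\tr_1^m=\tr_1^{m/2}\circ\tr_{m/2}^m$, and since $(cx^{q+1})^q=c^qx^{q(q+1)}=c^qx^{q+1}$ one gets
\[
Q(x)=\tr_1^{m/2}\!\bigl(cx^{q+1}+c^qx^{q+1}\bigr)=\tr_1^{m/2}\!\bigl(a\,x^{q+1}\bigr),
\qquad a:=c+c^{q}.
\]
The hypothesis $c+c^{p^{m/2}}\neq 0$ says precisely that $a\neq 0$, and $a^q=a$ forces $a\in\gf_q^{*}$. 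Now $x\mapsto x^{q+1}$ is the norm $\gf_{p^m}\to\gf_q$: it sends $0$ to $0$ and is $(q+1)$-to-one from $\gf_{p^m}^{*}$ onto $\gf_q^{*}$. Hence
\[
\widehat{Q}(0)=\sum_{x\in\gf_{p^m}}\omega_p^{\tr_1^{m/2}(a\,x^{q+1})}
=1+(q+1)\sum_{y\in\gf_q^{*}}\omega_p^{\tr_1^{m/2}(ay)}
=1-(q+1)=-q,
\]
because $\sum_{y\in\gf_q}\omega_p^{\tr_1^{m/2}(ay)}=0$ for $a\neq 0$.

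Combining the two displays gives $\widehat{Q}(0)=-p^{m/2}$, whence $\epsilon=-1$, and Theorem \ref{Th-Main2} then yields the claimed two-weight $[p^{m-1}-(p-1)p^{(m-2)/2}-1,\,m]$ code. The only genuinely substantive step is the trace reduction: collapsing the outer trace turns the twisted monomial $cx^{q+1}$ into $a$ times an honest norm, and this is exactly where the Bentness hypothesis $c+c^{p^{m/2}}\neq0$ is consumed, since it is what guarantees $a\in\gf_q^{*}$ and hence the vanishing of the final character sum. I expect no real obstacle beyond this bookkeeping; as an alternative one could read the value of $\widehat{Q}(0)$ directly off Lemma~2 of \cite{Helleseth-K2006} and compare $\mathbb{N}_0$ with $|D_Q|$ as indicated in the paragraph preceding the corollary, but the self-contained norm computation above seems cleaner.
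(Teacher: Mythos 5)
Your proposal is correct and follows essentially the same route as the paper: both arguments hinge on the observation that $x\mapsto x^{p^{m/2}+1}$ is the $(p^{m/2}+1)$-to-one norm map onto $\gf_{p^{m/2}}^*$ together with the trace collapse $\tr_{m/2}^m(cx^{q+1})=(c+c^q)x^{q+1}$, which is exactly where the hypothesis $c+c^{p^{m/2}}\neq 0$ is consumed. The only cosmetic difference is that the paper evaluates the fiber count $|\{x\in\gf_{p^m}^*:Q(x)=0\}|$ by summing over all $y\in\gf_p$ and compares it with the formula of Lemma \ref{lemma-mainlemma_1}, whereas you evaluate the single Walsh coefficient $\widehat{Q}(0)$ and read off $\epsilon$ as its sign; the two computations are equivalent.
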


\begin{proof}
  According to Theorem \ref{Th-Main2}, it is sufficient to show that $\epsilon=-1$ for the Kasami Bent function.
  Note that $x^{p^{m/2}+1}$ runs through each element
of $\gf_{p^{m/2}}^*$ exactly $p^{m/2+1}$ times as $x$ ranges over $\gf_{p^m}^*$.
Thus for each $y\in \gf_p^*$, we have
\begin{eqnarray*}
\sum_{x\in \gf_{p^m}^*}\omega_p^{\tr_1^{m}(ycx^{p^{m/2}+1})}=(p^{m/2}+1)\sum_{z\in \gf_{p^{m/2}}^*}\omega_p^{\tr_1^{m}(ycz)}
=-1-p^{m/2}.
\end{eqnarray*}
It then follows that
\begin{eqnarray*}\label{eqn-kasami-1}
|\{x\in \gf_{p^m}^*: Q(x)=0)\}|&=&{1\over p}\sum_{x\in \gf_{p^m}^*}\sum_{y\in \gf_p}\omega_p^{y\tr_1^m(cx^{p^{m/2}+1})}\nonumber\\
&=&\frac{1}{p}\left(p^m-1+\sum_{y\in \gf_p^*}\sum_{x\in \gf_{p^m}^*}\omega_p^{\tr_1^m(ycx^{p^{m/2}+1})}\right)\nonumber\\
&=&\frac{1}{p}(p^m-1-(p-1)(p^{m/2}+1))\nonumber\\
&=&p^{m-1}-(p-1)p^{\frac{m-2}{2}}-1.
\end{eqnarray*}
Comparing this value with (\ref{eqn-size-DQ}), one obtains that $\epsilon=-1$. This completes the proof.
\end{proof}

\begin{example}\label{example-kasa-even}
Let $p=3$, $m=4$ and $Q(x)=\tr_1^m(x^{p^2+1})$. The Magma program shows that
$\C_{D_Q}$ has parameters $[20,4,12]$ and weight enumerator $1+60z^{12}+20z^{18}$, which agrees with
the result in Corollary \ref{coro-kasa}. This code is optimal due to the Griesmer bound.
\end{example}

\begin{example}\label{example-kasa-even-2}
Let $p=5$, $m=4$ and $Q(x)=\tr_1^m(x^{p^2+1})$. The Magma program shows that
$\C_{D_Q}$ has parameters $[104,4,80]$ and weight enumerator $1+520z^{80}+104z^{100}$, which agrees with
the result in Corollary \ref{coro-kasa}. This code is almost optimal since the best linear code of length 104 and dimension
4 over $\gf_5$ has minimal weight 81.
\end{example}

\subsection{Linear Codes From the Helleseth-Gong Function}

The Helleseth-Gong (HG) function $H(x)$
from $\gf_{p^m}$ to $\gf_p$ is defined  by \cite{Helleseth}
\begin{eqnarray}\label{eqn-hg-function}
H(x)=\tr^m_1 \left(\sum_{i=0}^{\ell}u_{i}x^{(p^{2 i}+1)/2}\right)
\end{eqnarray}
where  $m=2\ell+1$, $1\le s\leq 2\ell$ is an integer such that
$\gcd(s,2\ell+1)=1$, $b_{0}=1$, $b_{is}=(-1)^{i}$ and
$b_{i}=b_{2\ell+1-i}$ for $i=1,2,\cdots,\ell$, $u_{0}=b_{0}/2=(p+1)/2$,
and $u_{i}=b_{2i}$ for $i=1,2,\cdots,\ell$. Herein, all the indexes of
$b$'s are taken mod $(2\ell+1)$. The following result
was proved by Jang {\em et al.} (\cite{Jang}, p. 1842).

\begin{lemma}\label{lem-hg-function}
Let $H(x)$ be the HG function defined by (\ref{eqn-hg-function}). Then
$Q(x)=H(x^2)$ is a quadratic Bent function.
\end{lemma}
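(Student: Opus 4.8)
The plan is to reduce the Bent property to a rank computation and then to a statement about the kernel of a linearized polynomial. First I would carry out the squaring substitution: since $(x^2)^{(p^{2i}+1)/2} = x^{p^{2i}+1}$, we obtain
\[
Q(x) = H(x^2) = \tr_1^m\left(\sum_{i=0}^{\ell} u_i x^{p^{2i}+1}\right).
\]
Each exponent $p^{2i}+1 = p^{2i}+p^0$ is a sum of two powers of $p$, so every summand $\tr_1^m(u_i x^{p^{2i}+1})$ is a quadratic form over $\gf_p$, and hence so is $Q$. By Lemma \ref{lemma-fullrank}, together with the observation recorded after it that a quadratic form is Bent precisely when it has full rank, it therefore suffices to prove that $Q$ has rank $m$.

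Next I would pass to the polar (bilinear) form. Setting $B(x,y) = Q(x+y) - Q(x) - Q(y)$ and expanding $(x+y)^{p^{2i}+1}$, the purely quadratic terms cancel and one is left with
\[
B(x,y) = \tr_1^m\left(\sum_{i=0}^{\ell} u_i\bigl(x^{p^{2i}}y + x\,y^{p^{2i}}\bigr)\right) = \tr_1^m\bigl(y\,L(x)\bigr),
\]
where, after applying the identity $\tr_1^m(a b^{p^k}) = \tr_1^m(a^{p^{m-k}} b)$ to move all Frobenius powers onto a single variable, the linearized polynomial is
\[
L(x) = \sum_{i=0}^{\ell} u_i x^{p^{2i}} + \sum_{i=0}^{\ell} u_i^{p^{m-2i}} x^{p^{m-2i}}.
\]
Since the trace pairing is nondegenerate, the radical $V$ appearing in the definition of rank is exactly $\ker L$, so $Q$ attains full rank $m$ (equivalently $|V| = 1$) if and only if $L$ is a permutation of $\gf_{p^m}$, i.e.\ if and only if $L(x) = 0$ forces $x = 0$.

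The main obstacle is this last step, namely showing $\ker L = \{0\}$. Here the arithmetic of the coefficients $u_i = b_{2i}$ must be exploited through the defining relations $b_0 = 1$, $b_{is} = (-1)^i$, and $b_i = b_{2\ell+1-i}$, combined with $\gcd(s,2\ell+1) = 1$; these constraints are what force enough cancellation in the $(2\ell+1)$-term linearized polynomial $L$ to rule out a nonzero root. I expect this nondegeneracy, rather than the routine quadratic-form bookkeeping above, to be the genuine content of the lemma, and it is in essence the statement established by Jang \emph{et al.} \cite{Jang}; accordingly I would either invoke their result directly or reproduce their evaluation of $|\{x : L(x) = 0\}|$. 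A parallel route would bypass $L$ entirely by computing the Walsh value $\widehat{Q}(\lambda)$ and checking $|\widehat{Q}(\lambda)| = p^{m/2}$ for every $\lambda \in \gf_{p^m}$, but this rests on the same correlation computation for the Helleseth--Gong function and so does not circumvent the core difficulty.
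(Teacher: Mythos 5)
Your reduction (squaring to get $Q(x)=\tr_1^m(\sum_i u_i x^{p^{2i}+1})$, passing to the polar form, and identifying the radical with $\ker L$) is sound, but the paper offers no proof of this lemma at all: it simply attributes the result to Jang \emph{et al.} \cite{Jang}. Since you likewise defer the one genuinely nontrivial step---the nondegeneracy forced by the coefficient relations $b_0=1$, $b_{is}=(-1)^i$, $b_i=b_{2\ell+1-i}$---to that same reference, your proposal is in substance the same as the paper's treatment, with some correct additional scaffolding locating where the difficulty lies.
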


The following follows immediately from Theorem \ref{Th-Main2} and Lemma \ref{lem-hg-function}.

\begin{corollary}\label{coro-hg}
Let $m$ be odd and $Q(x)=H(x^2)$ where $H(x)$ is the HG function defined by (\ref{eqn-hg-function}). Then $\C_{D_Q}$ is a three-weight $[p^{m-1}-1, m]$  code over $\gf_p$ with the weight distribution in Table 1.
\end{corollary}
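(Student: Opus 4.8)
The plan is to recognize this corollary as a direct specialization of Theorem~\ref{Th-Main1}, so that essentially all the work has already been carried out in the preceding results. The two ingredients I would combine are Lemma~\ref{lem-hg-function} and the odd-$m$ structure theorem, and the proof amounts to checking that their hypotheses line up.

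First I would invoke Lemma~\ref{lem-hg-function}, which asserts (following Jang \emph{et al.}) that $Q(x)=H(x^2)$ is a quadratic Bent function from $\gf_{p^m}$ to $\gf_p$. This is the only genuinely non-elementary input, and I would simply cite it. Intuitively it is plausible because each monomial $x^{(p^{2i}+1)/2}$ in the HG function becomes $x^{p^{2i}+1}$ after the substitution $x\mapsto x^2$, and $p^{2i}+1$ is a sum of two powers of $p$; hence $H(x^2)$ is a $\gf_p$-linear combination (through the trace) of Dembowski--Ostrom monomials and is therefore a quadratic form, while the Bent property is exactly what Lemma~\ref{lem-hg-function} supplies.

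Having established that $Q$ is a quadratic Bent function, and recalling the standing hypothesis that $m$ is odd, I would then apply Theorem~\ref{Th-Main1} verbatim: it states that for \emph{any} quadratic Bent function over $\gf_p$ with $m$ odd, the code $\C_{D_Q}$ built from (\ref{eqn-defn-CD})--(\ref{eqn-def-trace}) is a three-weight $[p^{m-1}-1,\,m]$ code with the weight distribution in Table~1. Since $Q(x)=H(x^2)$ meets precisely these hypotheses, the conclusion transfers immediately and the corollary is proved.

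I do not expect any real obstacle, precisely because both prerequisites may be assumed: the Bent property of $H(x^2)$ is guaranteed by Lemma~\ref{lem-hg-function}, and the passage from ``quadratic Bent'' to ``three-weight code with the stated parameters'' is exactly the content of Theorem~\ref{Th-Main1}. The one point worth a moment's care is the bookkeeping that the corollary concerns odd $m$, so it is Theorem~\ref{Th-Main1} (not Theorem~\ref{Th-Main2}, which handles even $m$) that must be cited; once this is noted, nothing further is required.
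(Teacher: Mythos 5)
Your proposal is correct and is essentially identical to the paper's own (one-line) justification: cite Lemma~\ref{lem-hg-function} for the Bentness of $Q(x)=H(x^2)$ and then apply the odd-$m$ structure theorem. Your care in citing Theorem~\ref{Th-Main1} rather than Theorem~\ref{Th-Main2} is well placed --- the paper's text actually misattributes the corollary to Theorem~\ref{Th-Main2}, which is an evident typo since that theorem handles even $m$.
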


\begin{example}\label{example-hg-odd}
Let $p=3$, $m=5$ and and the HG function in (\ref{eqn-hg-function}) be given by
$H(x)=\tr_1^5(2x+2x^{5}+x^{41})$.  Then $Q(x)=\tr_1^5(2x^2+2x^{10}+x^{82})$.
The Magma program shows that
$\C_{D_Q}$ has parameters $[80,5,48]$ and weight enumerator $1+90z^{48}+80z^{54}+72z^{60}$, which
agrees with the result in Corollary \ref{coro-hg}.
\end{example}

%
%
%
%
%
%

\subsection{Linear Codes From Quadratic Bent Function in Polynomial Form}

In general, up to equivalence (Section IV, \cite{Helleseth-K2006}), any quadratic function having no linear term over $\gf_{p^m}$ can be expressed as the form of
\begin{eqnarray}\label{eq-quadr-poly}
  Q(x)=\sum_{i=0}^{\lfloor m/2\rfloor}\tr_1^m(c_ix^{p^i+1}),
\end{eqnarray}
where $\lfloor x\rfloor$ denotes the largest integer not exceeding $x$ and $c_i\in\gf_{p^m}$ for $i=0,1,\cdots,\lfloor m/2\rfloor$.

For an odd prime $p$, Helleseth and Kholosha proved that $Q(x)$ defined by \eqref{eq-quadr-poly} is Bent if and only if a corresponding $m\times m$
symmetric matrix is nonsingular \cite{Helleseth-K2006}. Normally, it is difficult to determine whether a matrix of order $m$ has full rank or not. But for some special cases, for example, the case of $c_i\in\gf_{p}$ for $i=0,1,\cdots,\lfloor m/2\rfloor$, the Bentness of $Q(x)$ defined by \eqref{eq-quadr-poly} can be determined easier \cite{Helleseth-K2006,Khoo-GS}. Following the line of this work, Li, Tang and Helleseth presented a large number of Bent functions of the form \eqref{eq-quadr-poly} with $c_i\in\gf_{p}$ for $i=0,1,\cdots,\lfloor m/2\rfloor$ in a simple way \cite{LiNian}. Then, according to Theorems \ref{Th-Main1} and \ref{Th-Main2}, linear codes with two or three weights can be obtained.

\begin{corollary}\label{cor-quadr-poly}
   Let $Q(x)$ be defined as \eqref{eq-quadr-poly}. If $Q(x)$ is Bent, then
$\C_{D_Q}$ is a three-weight $[p^{m-1}-1, m]$ code over $\gf_p$ with the weight distribution in Table 1 if $m$ is odd, and for even $m$, $\C_{D_Q}$ is a  two-weight $[p^{m-1}+\epsilon (p-1)p^{\frac{m-2}{2}}-1, m]$  code over $\gf_p$ with the weight distribution in Table 2.
\end{corollary}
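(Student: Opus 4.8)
The plan is to recognize that this corollary is essentially an immediate consequence of Theorems \ref{Th-Main1} and \ref{Th-Main2}, so the only thing that genuinely requires argument is that the function $Q(x)$ in \eqref{eq-quadr-poly} meets the hypotheses of those theorems, namely that it is a \emph{quadratic} Bent function in the sense of Section \ref{sec-pre}.

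First I would verify that every $Q(x)$ of the shape \eqref{eq-quadr-poly} is a quadratic form. Fixing a basis $\{\beta_1,\ldots,\beta_m\}$ of $\gf_{p^m}$ over $\gf_p$ and writing $x=\sum_{j}x_j\beta_j$ with $x_j\in\gf_p$, the Frobenius map $x\mapsto x^{p^i}$ is $\gf_p$-linear, so $x^{p^i}=\sum_j x_j\beta_j^{p^i}$ and hence $x^{p^i+1}=\sum_{j,k}x_jx_k\,\beta_j^{p^i}\beta_k$ is bilinear in the coordinates. Applying the $\gf_p$-linear map $\tr_1^m$, multiplying by $c_i$, and summing over $i$ leaves a homogeneous polynomial of degree two in $x_1,\ldots,x_m$. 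Therefore $Q(x)$ is a quadratic form over $\gf_p$, regardless of the choice of the $c_i$.

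Next, since $Q(x)$ is assumed to be Bent, the equivalence recorded just after Lemma \ref{lemma-fullrank} --- a quadratic form over $\gf_p$ is a Bent function if and only if it has full rank --- shows that $Q(x)$ has rank $m$, i.e.\ it is a quadratic Bent function. At this point Theorems \ref{Th-Main1} and \ref{Th-Main2} apply directly: for odd $m$ I invoke Theorem \ref{Th-Main1} to conclude that $\C_{D_Q}$ is the three-weight $[p^{m-1}-1,m]$ code with the weight distribution in Table 1, and for even $m$ I invoke Theorem \ref{Th-Main2} to conclude that $\C_{D_Q}$ is the two-weight $[p^{m-1}+\epsilon(p-1)p^{\frac{m-2}{2}}-1,m]$ code with the weight distribution in Table 2, where $\epsilon$ records whether $Q$ is equivalent to Type I or Type III.

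The step carrying essentially the whole content is the first one, and even it is routine; the real difficulty was already absorbed into the proofs of the main theorems and of Lemma \ref{Lemma-mainlemma-2}. The one conceptual point worth flagging is that no special feature of the coefficients $c_i$ or the exponents $p^i+1$ is used beyond quadraticity: Bentness is supplied as a hypothesis rather than verified here, so pinning down $\epsilon$ (and hence the precise length in the even case) is left to a case-by-case analysis of the associated symmetric matrix, exactly as discussed in the surrounding text for the Gold and Kasami families.
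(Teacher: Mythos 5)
Your proposal is correct and matches the paper's treatment: the paper states this corollary without proof as an immediate consequence of Theorems \ref{Th-Main1} and \ref{Th-Main2}, which is exactly the reduction you carry out. Your added verification that \eqref{eq-quadr-poly} is a homogeneous quadratic form (via the $\gf_p$-linearity of the Frobenius) and the appeal to the Bent-iff-full-rank equivalence are the right, routine details to fill in.
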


\begin{example}\label{example-polynomial-odd}
Let $p=3$, $m=5$ and $Q(x)=\tr_1^m(x^2+2x^{p+1}+x^{p^2+1})$. According to Corollary 11 in \cite{LiNian}, $Q(x)$ is a Bent function in $\gf_{3^5}$.
The Magma program shows that
$\C_{D_Q}$ has parameters $[80,5,48]$ and weight enumerator $1+90z^{48}+80z^{54}+72z^{60}$, which
agrees with the result in Corollary \ref{cor-quadr-poly}.
\end{example}

\begin{remark}\label{remark-sign}
Notice that Proposition 1 in \cite{Helleseth-K2006} gave an explicit expression for the Walsh transform values of $Q(x)$ defined by \eqref{eq-quadr-poly} based on the dual of $Q(x)$ and the determinant of $Q(x)$ (i.e., the determinant of the corresponding matrix associated with $Q(x)$). However, it does not help us to determine the sign of $\epsilon$ for even $m$. This is because that one can determine which Type of $Q(x)$ is equivalent to according to Lemma \ref{Lemma_Klapper2} if one knows the determinant of $Q(x)$. Thus, the determination of the sign of $\epsilon$ in Corollary \ref{cor-quadr-poly} remains open.
The reader is invited to join the adventure.
\end{remark}

\vspace{3mm}

Finally, we conclude this section by mentioning that all the codes obtained above can be punctured into
a shorter ones whose weight distribution can be easily derived from those of the original codes.
Note that for any quadratic Bent function $Q(x)$, it is easy to verify that $Q(yx)=y^2Q(x)$ for any $y\in \gf_p$.
Thus $Q(x)=0$ means that $Q(yx)=0$ for all $y\in \gf^*_p$. Hence the set $D_Q$ of (\ref{eqn-D-Q})
can be expressed as
\begin{eqnarray}\label{eqn-D-Q-bar}
D_Q=\gf^*_p \overline{D}_Q=\{yz: y\in \gf_p^* \mbox{~~and~~} z\in \overline{D}_Q\}
\end{eqnarray}
where $z_i/z_j\notin \gf_p^*$ for each pair of distinct elements $z_i$ and $z_j$ in $\overline{D}_Q$.
This implies that $\C_{\overline{D}_Q}$ is a punctured version of $\C_{D_Q}$. Notice that for any $a\in \gf_{p^m}$,
\begin{eqnarray}\label{eqn-relation}
&&|\{x\in D_Q: Q(x)=0~\mbox{and}~\tr_1^m(ax)=0\}|\nonumber\\
&&=(p-1)|\{x\in \overline{D}_Q: Q(x)=0~\mbox{and}~\tr_1^m(ax)=0\}|.
\end{eqnarray}
We immediately have the following results for $\C_{\overline{D}_Q}$.
\begin{corollary}\label{cor-Main2-1}
Let $m$ be odd and $Q(x)$ be any quadratic Bent functions from $\gf_{p^m}$ to $\gf_p$. Then
$\C_{\overline{D}_Q}$ is a three-weight code over $\gf_p$ with parameters
$$
\left[\frac{p^{m-1}-1}{p-1}, m\right]
$$
and the weight distribution in Table 3.
\end{corollary}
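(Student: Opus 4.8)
The plan is to deduce everything from the already-established code $\C_{D_Q}$ of Theorem \ref{Th-Main1} together with the coset decomposition (\ref{eqn-D-Q-bar}) and the counting identity (\ref{eqn-relation}); no new character-sum computation should be needed. First I would pin down the length. Since $\gf_p^*$ acts freely on $D_Q$ (the pairwise condition $z_i/z_j\notin\gf_p^*$ guarantees the orbits $\gf_p^*z$ are disjoint and each has exactly $p-1$ elements), the decomposition (\ref{eqn-D-Q-bar}) gives $|\overline{D}_Q| = |D_Q|/(p-1)$. For odd $m$, Lemma \ref{lemma-mainlemma_1} yields $|D_Q| = p^{m-1}-1$, so the length is $\frac{p^{m-1}-1}{p-1}$, as claimed.

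The heart of the argument is the observation that puncturing along the $\gf_p^*$-action scales every codeword weight by the fixed factor $1/(p-1)$. For $b\in\gf_{p^m}$ the Hamming weight of $\bc_b$ in $\C_{\overline{D}_Q}$ is $|\overline{D}_Q| - |\{z\in\overline{D}_Q : \tr_1^m(bz)=0\}|$. Because $\tr_1^m(b\cdot yz) = y\,\tr_1^m(bz)$ vanishes for $y\in\gf_p^*$ exactly when $\tr_1^m(bz)=0$, the trace condition is constant on each orbit, which is precisely relation (\ref{eqn-relation}) with $a=b$; recalling that $Q\equiv 0$ on $\overline{D}_Q\subseteq D_Q$, this reads $N_b = (p-1)\,|\{z\in\overline{D}_Q : \tr_1^m(bz)=0\}|$, where $N_b=|D_{Q,b}|$ is the quantity of Lemma \ref{Lemma-mainlemma-2}. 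Substituting, $\mathrm{WT}(\bc_b) = \frac{|D_Q|}{p-1} - \frac{N_b}{p-1} = \frac{|D_Q|-N_b}{p-1}$, and since $|D_Q|-N_b$ is exactly the weight of the corresponding codeword in $\C_{D_Q}$ (as computed in the proof of Theorem \ref{Th-Main1}), the weight in $\C_{\overline{D}_Q}$ is $1/(p-1)$ times the weight in $\C_{D_Q}$.

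With this scaling in hand the weight distribution transfers immediately: the codewords are indexed by the same set $b\in\gf_{p^m}$, so each frequency $A_w$ from Table 1 is preserved, while every weight value is divided by $p-1$. Thus the three weights $(p-1)(p^{m-2}\mp p^{(m-3)/2})$ and $(p-1)p^{m-2}$ become $p^{m-2}\mp p^{(m-3)/2}$ and $p^{m-2}$, with the multiplicities of Table 1, which is the content of Table 3. Finally, to confirm the dimension I would sum the frequencies to check the total is $p^m$ (one zero codeword together with the three nonzero classes accounting for $(p-1)p^{m-1}$ and $p^{m-1}-1$ codewords), and note that all nonzero weights remain strictly positive; hence $b\mapsto\bc_b$ is injective and $\dim\C_{\overline{D}_Q}=m$.

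I do not expect a genuine obstacle here, since the decisive coset-constancy of the trace condition is already packaged in (\ref{eqn-relation}); the only points requiring care are verifying that $\overline{D}_Q\subseteq D_Q$ so that $Q$ vanishes identically on it (making the $Q=0$ clause in (\ref{eqn-relation}) automatic), and checking that the weight scaling leaves no nonzero codeword of weight $0$, so that no collapse of the dimension occurs.
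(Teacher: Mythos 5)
Your argument is correct and is exactly the paper's (implicit) proof: the authors justify Corollary \ref{cor-Main2-1} by the orbit decomposition (\ref{eqn-D-Q-bar}) and the counting identity (\ref{eqn-relation}), which show that $\C_{\overline{D}_Q}$ is the punctured code whose length and codeword weights are those of $\C_{D_Q}$ from Theorem \ref{Th-Main1} divided by $p-1$, with multiplicities unchanged. Your additional checks (that $Q$ vanishes on $\overline{D}_Q$, that the multiplicities sum to $p^m-1$ so the dimension stays $m$) are exactly the details the paper leaves tacit.
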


\begin{corollary}\label{cor-Main2-2}
Let $m$ be even and $Q(x)$ be any quadratic Bent functions from $\gf_{p^m}$ to $\gf_p$. Then
$\C_{\overline{D}_Q}$ is a two-weight code with parameters
$$
\left[\frac{p^{m-1}-1}{p-1}+\epsilon p^{\frac{m-2}{2}}, m\right]
$$
and the weight distribution in Table 4.
\end{corollary}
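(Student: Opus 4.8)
The plan is to exploit the fact, already recorded in \eqref{eqn-D-Q-bar} and \eqref{eqn-relation}, that $\C_{\overline{D}_Q}$ is a punctured version of $\C_{D_Q}$, so that its parameters follow from Theorem \ref{Th-Main2} after a uniform rescaling. First I would pin down the length. Since $D_Q=\gf_p^*\,\overline{D}_Q$ and the $p-1$ scalar multiples $yz$ with $y\in\gf_p^*$ of a fixed $z\in\overline{D}_Q$ are pairwise distinct in the field $\gf_{p^m}$, we have $|D_Q|=(p-1)\,|\overline{D}_Q|$. Combining this with the length $p^{m-1}+\epsilon(p-1)p^{(m-2)/2}-1$ of $\C_{D_Q}$ from Theorem \ref{Th-Main2} gives
$$|\overline{D}_Q|=\frac{p^{m-1}+\epsilon(p-1)p^{\frac{m-2}{2}}-1}{p-1}=\frac{p^{m-1}-1}{p-1}+\epsilon p^{\frac{m-2}{2}},$$
which is exactly the claimed length.

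Next I would compute the weight of an arbitrary codeword. Fix $b\in\gf_{p^m}$. Because $\overline{D}_Q\subseteq D_Q$, every $x\in\overline{D}_Q$ already satisfies $Q(x)=0$, so the number of zero coordinates of $\bc_b$ in $\C_{\overline{D}_Q}$ is $|\{x\in\overline{D}_Q:\tr_1^m(bx)=0\}|$, which by \eqref{eqn-relation} equals $\tfrac{1}{p-1}N_b$ with $N_b=|D_{Q,b}|$. Hence
$$\wt(\bc_b)=|\overline{D}_Q|-\frac{N_b}{p-1}=\frac{|D_Q|-N_b}{p-1},$$
so the weight of each codeword of $\C_{\overline{D}_Q}$ is precisely $1/(p-1)$ times the weight of the corresponding codeword of $\C_{D_Q}$, while the index $b$ still runs over all of $\gf_{p^m}$. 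Dividing the two nonzero weights in Table 2 by $p-1$ and retaining their multiplicities then yields Table 4: the weight $p^{m-2}$ occurs $p^{m-1}+\epsilon(p-1)p^{(m-2)/2}-1$ times, and the weight $p^{m-2}+\epsilon p^{(m-2)/2}$ occurs $(p-1)(p^{m-1}-\epsilon p^{(m-2)/2})$ times.

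Finally the dimension: since the smallest nonzero weight $p^{m-2}$ is strictly positive, only $b=0$ produces the zero codeword, so the map $b\mapsto\bc_b$ is injective and $\C_{\overline{D}_Q}$ contains $p^m$ distinct codewords, giving dimension $m$. I expect no genuine obstacle here; the only point needing care is the coset bookkeeping behind \eqref{eqn-relation}, namely that $D_Q=\gf_p^*\,\overline{D}_Q$ partitions $D_Q$ into blocks of size exactly $p-1$ (which holds because $yx=x$ with $x\neq0$ forces $y=1$ in a field) and that $Q(yx)=y^2Q(x)$ keeps each entire block inside $D_Q$. Granting this, the corollary reduces to the elementary rescaling described above.
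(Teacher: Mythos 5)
Your proposal is correct and follows exactly the route the paper intends: the paper records the decomposition $D_Q=\gf_p^*\overline{D}_Q$ and the counting identity \eqref{eqn-relation}, then states the corollary as an immediate consequence, and your write-up simply makes explicit the resulting division of the length and the two nonzero weights of Table 2 by $p-1$ while the multiplicities (indexed by $b\in\gf_{p^m}$) are unchanged. The details you supply --- the block size $p-1$, the scaling $\wt(\bc_b)=(|D_Q|-N_b)/(p-1)$, and the dimension count via the weight distribution --- are all consistent with Theorem~\ref{Th-Main2} and with what the paper leaves implicit.
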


\begin{remark}
The codes $\C_{\overline{D}_Q}$ in Corollaries \ref{cor-Main2-1} and \ref{cor-Main2-2} are exactly the $p$-ary projective codes from nondegenerate quadrics in
projective spaces which were studied in \cite{Wolfmann} and  \cite{Wan}.
Based on some results in projective geometry, Wan obtained the minimal weight and weight hierarchies of these linear codes (see Theorem 9 in \cite{Wan}).
To the best of our knowledge, the weight distribution of $\C_{\overline{D}_Q}$ has not been established in literature. In Corollaries \ref{cor-Main2-1} and \ref{cor-Main2-2}, employing the theory of
quadratic forms over finite fields,  we
completely determined the weight distribution of the codes $\C_{\overline{D}_Q}$. In addition, following the recent work of Ding \textit{et al.} \cite{Ding1}, \cite{Ding2}, we give the simple trace representation of the codewords in  $\C_{\overline{D}_Q}$ (see  \eqref{eqn-def-trace}) which may be
useful from the viewpoint of applications.  These are our contributions to the code $\C_{\overline{D}_Q}$.
\end{remark}

\begin{example}\label{example-kasa-even}
Let $\C_{D_Q}$ be the linear codes with parameters $[80,5,48]$ in Examples \ref{example-planar-odd}, \ref{example-hg-odd} and \ref{example-polynomial-odd}.
The Magma program shows that
$\C_{\overline{D}_Q}$ has parameters $[40,5,24]$ and weight enumerator $1+90z^{24}+80z^{27}+72z^{30}$ which agrees with the result in Corollary \ref{cor-Main2-1}. This code is optimal in the sense that
any ternary code of length 40 and dimension 5 cannot have minimal distance 25 or more \cite{Eupen}.
\end{example}

\begin{example}\label{example-kasa-even2-1}
Let $\C_{D_Q}$ be the linear codes with parameters $[20,4,12]$ in Example \ref{example-kasa-even}.
The Magma program shows that
$\C_{\overline{D}_Q}$ has parameters $[10,4,6]$ and weight enumerator $1+60z^{6}+20z^{9}$ which agrees with the result in Corollary \ref{cor-Main2-2}. This code is optimal due to the Griesmer bound.
\end{example}

\begin{example}\label{example-kasa-even2-2}
Let $\C_{D_Q}$ be the linear codes with parameters $[104,4,80]$ in Example \ref{example-kasa-even-2}.
 The Magma program shows that
$\C_{\overline{D}_Q}$ has parameters $[26,4,20]$ and weight enumerator $1+520z^{20}+104z^{25}$, which agrees with
the result in Corollary \ref{cor-Main2-2}. This code is  optimal in the sense that it meets the Griesmer bound.
\end{example}

\begin{table}[ht]
\caption{The weight distribution of $\C_{\overline{D}_Q}$ for odd $m$.}\label{tab-CG2}
\centering
\begin{tabular}{|l|l|}
\hline
Weight $w$    & No. of codewords $A_w$  \\ \hline
$0$                                                        & $1$ \\ \hline
$p^{m-2}-p^{\frac{m-3}{2}}$          & $\frac{p-1}{2}(p^{m-1}+p^{\frac{m-1}{2}})$ \\ \hline
$p^{m-2}$                               & $p^{m-1}-1$ \\ \hline
$p^{m-2}+p^{\frac{m-3}{2}}$          &  $\frac{p-1}{2}(p^{m-1}-p^{\frac{m-1}{2}})$  \\ \hline
\end{tabular}
\end{table}

\begin{table}[ht]
\caption{The weight distribution of $\C_{\overline{D}_Q}$ for even $m$.}\label{tab-CG2}
\centering
\begin{tabular}{|l|l|}
\hline
Weight $w$    & No. of codewords $A_w$  \\ \hline
$0$                                                        & $1$ \\ \hline$p^{m-2}$                               & $p^{m-1}+\epsilon(p-1)p^{\frac{m-2}{2}}-1$  \\ \hline
$p^{m-2}+\epsilon p^{\frac{m-2}{2}}$          & $(p-1)(p^{m-1}-\epsilon p^{\frac{m-2}{2}})$  \\ \hline
\end{tabular}
\end{table}

\section{Concluding Remarks}\label{sec-con}

In this paper, inspired by the work of \cite{Ding2}, quadratic Bent functions were used to construct linear codes with
few nonzero weights over finite fields. It was shown that the presented linear codes have only two or three nonzero weights if the employed quadratic Bent functions have even or odd number of variables, respectively. The weight distributions of the codes were also determined and some of constructed linear codes are optimal in the sense that their parameters meet certain bound on linear codes. The work of this paper extended the main results in \cite{Ding2}.

Notice that Lemma \ref{Lemma_Klapper2} enables us to construct linear codes along the way discussed in the paper for any quadratic function (for example, semi-bent function) over finite fields. However the minimal distance of the corresponding linear codes may not be good if the employed quadratic function is not of full rank (i.e., is not Bent). This is another motivation for us to design linear codes from quadratic Bent functions in this paper.

\section*{Acknowledgments}
The authors are very grateful to the reviewers and the
 Editor  for their comments and
suggestions that improved the presentation and quality of this
paper.  Z. Zhou's research was supported by
the Natural Science Foundation of China, Proj. No. 61201243, the Sichuan Provincial Youth Science and Technology Fund under Grant
2015JQO004, and the Open Research Fund of National Mobile Communications Research Laboratory, Southeast University under Grant 2013D10.
C. Fan's research was supported by
the Natural Science Foundation of China, Proj. No. 11571285.

%
%
%

\end{document}